\documentclass{article}
\usepackage[novbox]{pdfsync}
\usepackage{amsfonts,amsmath,amssymb,mathrsfs,url,fancyhdr,
esint,comment,array,float}
\usepackage{todonotes}
\usepackage{booktabs}
\usepackage{enumitem}
\usepackage{geometry}
\usepackage{import,mathrsfs,float,pgfplots,mhchem,
bbm,graphicx,listings,microtype}
\usepackage[table]{xcolor} 
\usepackage{colortbl}
\usepackage{tcolorbox,multirow}\usepackage{etoolbox,booktabs}

\usepackage[toc,sort=use,nonumberlist]{glossaries}
\usepackage[utf8]{inputenc}   
\usepackage[T1]{fontenc}
\makeglossaries

\lstdefinestyle{mathematica}{
    language=Mathematica,
    basicstyle=\small\ttfamily,
    backgroundcolor=\color{gray!10},
    frame=single,
    breaklines=true,
    commentstyle=\color{green!60!black},
    keywordstyle=\color{blue},
    stringstyle=\color{red},
    showstringspaces=false,
    tabsize=2
}

\usepackage{nomencl}
\makenomenclature

\usepackage{makecell}
\usepackage[novbox]{pdfsync}
\usepackage{amssymb, amsfonts,mathtools,bbm,eurosym,graphicx,latexsym, amsmath,times,url}
\usepackage{graphicx,float,fancybox}
\usepackage{epsfig,epstopdf,tikz}
\usepackage[sans]{dsfont}
\usetikzlibrary{positioning,arrows,arrows.meta,calc}

\newcounter{tabcounter}
\newcounter{defcounter}
\newcounter{opcounter}
\newcounter{thmcounter}
\newcounter{excounter}

\newglossarystyle{unifiedstyle}{
  \setglossarystyle{list}
  
}

\usepackage[normalem]{ ulem }
\usepackage{soul}
\usepackage{cancel}
\usepackage{amsthm}
\usepackage{hyperref}
\newtheorem{open}{Problem}
\newtheorem{example}{Example}

\providecommand{\sep}{}
\providecommand{\bibcommenthead}{}

\usepackage{caption}
\usepackage{tikz}
\usepackage{pgfplots}
\pgfplotsset{compat=1.15}
\usepackage{mathrsfs}

\usetikzlibrary{
    shapes, shapes.geometric, shapes.symbols, shapes.callouts,
    arrows, arrows.meta,
    positioning, calc,
    decorations.pathmorphing, decorations.markings,
    backgrounds, fit,
    patterns, graphs, automata
}

\tikzset{
    startstop/.style={rectangle, rounded corners, minimum width=3cm,
        minimum height=1cm, text centered, draw=black, fill=blue!30},
    io/.style={trapezium, trapezium left angle=70, trapezium right angle=110,
        minimum width=3cm, minimum height=1cm, text centered,
        draw=black, fill=blue!30},
    process/.style={rectangle, minimum width=3cm, minimum height=1cm,
        text centered, draw=black, fill=green!30},
    arrow/.style={thick, ->, >=stealth},
    cloud/.style={draw, ellipse, fill=red!20, node distance=0.87cm,
        minimum height=2em},
    line/.style={draw, -latex'}
}
\usepackage{tikz-cd}
\tikzcdset{every label/.append style = {font = \small}}

\def\m0{{\mathcal R}_0}
\makeatletter
\newcommand{\beXa}[1][]{%
  \def\@temparg{#1}%
  \ifx\@temparg\@empty
    \begin{example}%
  \else
    \refstepcounter{excounter}
    \begin{example}[#1]%
    \label{e:#1}
    \newglossaryentry{e\theexcounter}{%
      name={#1},%
      description={p.~\pageref{e:#1}},
      user1={\theexcounter},%
      user2={Example}
    }%
  \fi
}
\newcommand{\eeXa}{\end{example}}

\newcommand{\beD}[1][]{%
  \def\@temparg{#1}%
  \ifx\@temparg\@empty
    \begin{definition}%
  \else
    \refstepcounter{defcounter}%
    \begin{definition}[#1]%
    \label{d:#1}%
    \newglossaryentry{d\thedefcounter}{%
      name={#1},%
      description={p.~\pageref{d:#1}},
      user1={\thedefcounter},%
      user2={Definition}%
    }%
  \fi
}
\newcommand{\eeD}{\end{definition}}

\newcommand{\beT}[1][]{%
  \def\@temparg{#1}%
  \ifx\@temparg\@empty
    \begin{theorem}%
  \else
    \refstepcounter{thmcounter}%
    \begin{theorem}[#1]%
    \label{t:#1}%
    \newglossaryentry{t\thethmcounter}{%
      name={#1},%
      description={p.~\pageref{t:#1}},%
      user1={\thethmcounter},%
      user2={Theorem}%
    }%
  \fi
}
\newcommand{\eeT}{\end{theorem}}
\newcommand{\beO}[1][]{%
  \def\@temparg{#1}%
  \ifx\@temparg\@empty
    \begin{open}%
  \else
    \refstepcounter{opcounter}%
    \begin{open}[#1]%
    \label{op:#1}%
    \newglossaryentry{o\theopcounter}{%
      name={#1},%
      description={p.~\pageref{op:#1}},
      user1={\theopcounter},%
      user2={Problem}%
    }%
  \fi
}
\newcommand{\eeO}{\end{open}}

\newcommand{\betA}[1][]{%
  \def\@temparg{#1}%
  \ifx\@temparg\@empty
    \begin{table}[htbp]%
  \else
    \refstepcounter{tabcounter}%
    \begin{table}[htbp]%

    \label{A:\thetabcounter}%

    \newglossaryentry{A\thetabcounter}{%
      name={#1},
      description={p.~\pageref{A:\thetabcounter}},%
      user1={\thetabcounter},%
      user2={Table}%
    }%
  \fi
}

\newcommand{\eetA}{\end{table}}
\makeatother
\newcommand{\printboth}{%
  \printglossary[style=unifiedstyle, title={ Definitions, Theorems, Examples and Open Problems}]%
}

\newtheorem{lemma}{Lemma}
\newtheorem{proposition}{Proposition}
\newtheorem{corollary}{Corollary}

\newtheorem{remark}{Remark}
\newtheorem{question}{Question}
\newtheorem{assumption}{Assumption}

\newtheorem{definition}{Definition}
\newtheorem{theorem}{Theorem}

\def\beL{\begin{lemma}}\def\eeL{\end{lemma}}
\def\beP{\begin{proposition}}\def\eeP{\end{proposition}}
\def\beC{\begin{corollary}}\def\eeC{\end{corollary}}
\def\beR{\begin{remark}}\def\eeR{\end{remark}}
\def\beQ{\begin{question}}\def\eeQ{\end{question}}
\def\beA{\begin{assumption}}\def\eeA{\end{assumption}}

\definecolor{funccolor}{RGB}{25,25,112}
\definecolor{desccolor}{RGB}{64,64,64}

\def\bep{\begin{pmatrix}}\def\eep{\end{pmatrix}}
\def\bev{\begin{vmatrix}}\def\eev{\end{vmatrix}}
\def\bea{\begin{eqnarray*}}\def\eea{\end{eqnarray*}}
\def\bc{\begin{cases}}\def\ec{\end{cases}}
\def\BEN{\begin{enumerate}}\def\EEN{\end{enumerate}}
\def\BI{\begin{itemize}}\def\EI{\end{itemize}}

\newcommand{\be}[1]{\begin{equation}\label{#1}}
\newcommand{\ee}{\end{equation}}
\newcommand{\beq}{\begin{eqnarray}}
\def\eeq{\end{eqnarray}}
\def\eqr{\eqref}\def\fr{\frac}\def\lbl{\label}
\def\Lra{\Longrightarrow}\def\cNGM{\cite{Diek,Van,Van08}}
\def\Eq{\Leftrightarrow}

\newcommand{\R}{\mathbb{R}}  

\def\w{\omega}

\def\ga{\gamma}
\def\La{\Lambda}

\newcommand{\bff}[1]{{\mbox{\boldmath$#1$}}}

\def\v1{\vec {\bff 1}}

\long\def\symbolfootnote[#1]#2{%
\begingroup
\def\thefootnote{\fnsymbol{footnote}}\footnote[#1]{#2}%
\endgroup}

 \def\Gam{\Gamma^{-}}

\def\RR{\mathcal R}

\def\sd{s_{0}}

\def\ch{characteristic polynomial}

\def\eval{eigenvalue}

\def\ie{i.e. }
\def\im{\item}

\def\nne{non-negative}

\def\para{parameter}

\def\resp{respectively}
\def\satg{satisfying}\def\sats{satisfies}
\def\ssec{\subsection}

\def\wk{well-known}

\newcommand\CRN{chemical reaction networks}

\def\MM{Michaelis-Menten}
\def\sm{stoichiometric matrix}

\def\brn{basic reproduction number}

\def\DFE{disease free equilibrium}

\def\ME{mathematical epidemiology}

\def\NGM{next generation matrix}

\def\bfp{boundary fixed point}

\def\RH{Routh-Hurwitz}

\def\regS{{\bf regular splitting}}

 \def\rk{rich kinetics}
\def\cV{\cite{Vas,VasSta,BSV,GSV}}
\def\spf{semi-parametric -semi functional}
\def\CRW{Capasso-Ruan-Wang}
\begin{document}
\title{A cocktail of \CRN\ and \ME\ tools for positive ODE stability problems}

\author{Florin Avram\thanks{Laboratoire de Math\'{e}matiques Appliqu\'{e}es, Universit\'{e} de Pau, 64000, Pau, France. \texttt{Florin.Avram@univ-Pau.fr}}
\and Rim Adenane\thanks{Laboratoire d'Analyse, G\'eom\'etrie et Applications, D\'epartement des Math\'ematiques, Universit\'e Ibn-Tofail, K\'enitra, 14000, Morocco. \texttt{rim.adenane@uit.ac.ma}}
\and Andrei-Dan Halanay\thanks{Department of Mathematics and Computer Science, University of Bucharest, Bucharest, RO-010014, Romania. \texttt{halanay@fmi.unibuc.ro}}}

\maketitle
\begin{abstract}
We continue recent attempts
to put together  concepts and results of \CRN\ theory (CRNT) and \ME\ (ME), for solving problems of stability of positive ODEs.

We provide first an elegant \CRN-flavored proof  of the most cited result in ME, the \NGM\ (NGM) theorem. 

We review next the ``symbolic-numeric" approach of Vassena and Stadler, which tackles bifurcation problems by viewing the \ch\  of the Jacobian at fixed points as a formal polynomial in the ``symbolic reactivities", and identifies its coefficients as ``Child Selection minors of the stoichiometric matrix". We also review two applications     of this approach
using the Mathematica package \texttt{Epid-CRN} at \url{https://github.com/florinav/EpidCRNmodels}, which implements tools from both \CRN\ and \ME.

\end{abstract}

\noindent\textbf{Keywords:}
biochemical interaction  networks; essentially nonnegative/positive systems; mathematical epidemiology; disease free equilibrium;
regular splitting; chemical reaction networks; stoichiometric matrix;
  siphons/semi-locking sets;  autocatalytic cores; symbolic-numeric Jacobian; symbolic-numeric Child Selection; rich kinetics;
Blokhuis-Stadler-Vassena oscillation recipes; SIR model; nonlinear force of infection;  bifurcation analysis; periodic solutions

\noindent\textbf{MSC Classification:} 34C10, 34D23, 34C25, 92C45, 92C60
 \tableofcontents

\section{Introduction}
This essay originated in a paper on the importance of reproducible scientific computation \cite{AAFG}, where we provided a  correction of  a famous paper (294 citations, and growing) concerning Hopf  and Bogdanov-Takens bifurcations due to L. Zhou and  M. Fan, ``Dynamics of an SIR epidemic model with limited medical
resources revisited",   in which we discovered and reported a significant numerical omission in  Figure \cite[Fig. 6]{ZhouFan}.


In the time since, our computational
efforts lead us to discover the parallel \CRN\ (CRN) literature, with its numerous computing packages, to develop a Mathematica package \texttt{Epid-CRN} at \url{https://github.com/florinav/EpidCRNmodels}, which implements tools from both \CRN\ and \ME, to throw new light on the Zhou-Fan paper in \cite{VAA}, using the Vassena-Stadler ``Child Selections of the symbolic Jacobian" approach,  and finally to establish a CRN-flavored proof of the most cited result in ME, the \NGM\ (NGM) theorem.

\paragraph{Contents}   We start in section \ref{s:pODE} by reviewing some fundamental concepts in the theory of positive systems,   like stoichiometric matrices, and symbolic Jacobians.

 We proceed in section \ref{s:NGM} with a new  proof of
 the NGM theorem, which surprisingly, seems  to be the first proof of the often observed fact of the block-triangularity of the Jacobian on the boundary face on which the \DFE\ (DFE) resides.

  Section \ref{s:SIRWS} introduces the SIRWS ME model, which serves us to illustrate CRN definitions  and results.

 Next we sketch, in section \ref{s:CS}, some ideas of  the Vassena-Stadler stability and bifurcation analysis of symbolic Jacobians. Revolutionarily, their approach splits the stability and bifurcation problems in two parts.
 First the algebra, which examines the possible existence, for some parameters,  of positive or purely imaginary eigenvalues for the  symbolic Jacobian, and secondly, the ``realization of these parameters for given rates".
 Their work \cite{Vas,VasSta} showed that realization is always possible (and straightforward algorithmically) for ``parameter rich kinetics" (Michaelis-Menten, Hill, etc), but does not include models which have even one mass-action reaction (in which case example specific efforts are required). This theory lead to Theorems 2 and 3 in \cite{VAA}, which are  rather surprising. For example, Theorem 2 implies that Hopf bifurcations occur in every ME model with rich kinetics (this has been illustrated in uncountably many papers,
 including \cite{VAA},  so we don't include further examples here).

  In section \ref{s:BR}, we use SIRWS  to illustrate further concepts related to Child Selections,
 and also the idea of \cite{BorRost}  to
   prove symbolically the presence of Hopf bifurcations (for a mass-action model) by examining a reduced subnetwork and by applying ``inheritance rules" \cite{Banaji,BBH25}.  Possible starting points for  subnetworks which admit already oscillation may be found systematically by the  Vassena-Stadler approach.

 Section \ref{s:CRW} reviews results of \cite{VAA}, notably theorems 2 (which applies essentially to any ME model) and 3, which applies
  to a general  ``Capasso-Ruan-Wang" ``\spf" SIRS-type epidemic model with nonlinear forces of infection and treatment \eqr{Vyss}, which generalizes   Zhou-Fan.
\printboth


\section{A bird's eye view of positive ODEs}\label{s:pODE}
Population dynamics,  ecology,  mathematical epidemiology (ME), virology, rumour spreading, social networks, chemical reaction networks theory (CRNT), are a few of the biological interaction networks (BIN) subfields, which study all positive dynamical systems, and have similar preoccupations: the existence and multiplicity of equilibria, their local and global stability, the occurrence of bifurcations,   persistence, permanence, extinction, etc.

\beD[Positive / non-negative ODE]
An ODE is called \emph{positive} \cite{rantzer2015scalable} or \emph{non-negative} \cite{haddad2010}
if the non-negative orthant
\[
\mathbb{R}^n_{\ge 0} := \{ x \in \mathbb{R}^n : x_i \ge 0,\; i = 1,\dots,n \}
\]
is forward invariant under the flow.
\eeD

\beD[CRN stoichiometric representation]
\label{d:crn}
A \emph{CRN stoichiometric representation} of a(ny) positive ODE
is a triple $(\mathcal{S},\Gamma,\mathbf{r})$, such that
\be{Ga} {x}'=\Gamma\,\mathbf{r}(x),\quad x\in\R^n_{\ge 0}\ee
where $\mathcal{S}$ is the index set of the variables (called \emph{species}),
$\Gamma\in\R^{n\times n_R}$ is a constant \emph{stoichiometric} matrix whose columns are
identified with \emph{reactions}, and $\mathbf{r}=(r_1,\dots,r_{n_R})^T$ is a vector of \emph{rates}, \satg\ that for each $\rho$, and
 $$\Gam_\rho\subseteq\mathcal{S}:=\{i \in S:\Gamma_{i\rho}<0\},$$
 it holds  that:
\BEN  \im
$r_\rho$ depends only on variables in $\Gam_\rho$; \im
$r_\rho:\R^n_{\ge 0}\to\R_{\ge 0}$ is locally Lipschitz, strictly increasing in each
$x_i$ with $i\in\Gam_\rho$;
\im
$r_\rho$  satisfies
\[
  x_i=0 \text{ for some } i\in\Gam_\rho \;\Longrightarrow\; r_\rho(x)=0.
\]
\EEN

\eeD
\beR These conditions ensure forward invariance of $\R^n_{\ge 0}$.
\eeR

\beR Such representations exist for any positive ODE, but are non-unique.\eeR

\beD[symbolic reactivity matrix, symbolic Jacobian]
Define the \emph{symbolic reactivity matrix}
$\mathcal{R}$ ($n_r \times n$), whose elements $r_{jk}$ are  formal positive indeterminates, replacing the partial derivative of rate $j$
with respect to species $k$, whenever this is non-zero.

The \emph{symbolic Jacobian} is the $n \times n$ matrix $G = \Gamma \cdot \mathcal{R}$, in which all the partial derivatives have been replaced by symbols.
\eeD

\beR The existence of any equilibria bifurcation  requires a choice of  symbols $r_{jk}$ so that the symbolic Jacobian $G$ is non-hyperbolic, i.e. possesses eigenvalues with zero real part.
\eeR

\beD[reactant and product matrices]
In CRNT,
 $\Gamma$ is further decomposed as
 \begin{equation}\label{Gam}
\Gamma_{kj}=\tilde{s}_k^j-s_k^j,
\end{equation}
where $s_k^j, \tilde{s}_k^j \geq 0$  called reactant and product coefficients, \resp, must satisfy that
$s_k^j>0$ {iff } the rate of reaction $j$ depends on species $k$.

 Then, $\Gamma=\tilde \alpha -\alpha,$  where both the reactant $\alpha$ and product matrices $\tilde \alpha$ contain only \nne\ elements.
\eeD

Especially relevant to the CRN-ME connection is the notion of {\bf siphons}. These  are in one-to-one correspondence with forward invariant boundary faces, cf. Angeli et al.~\cite{AdLS},
and in fact this yields their simplest definition. However, for completeness, we add
(and complete for our purpose) the definition given in the CRN literature:

\beD[siphon, minimal siphon, total/DFE siphon,  critical  siphon]\lbl{d:sif}
\BEN
\im A subset $S$ of species is a siphon/semi-locking set if every reaction that produces a species in S also consumes at least one species in $S$.

\im A siphon is said to be minimal if it does not properly contain any
other siphon.

\im The union of all minimal siphons will be called the total/DFE siphon.

\im A siphon is said to be critical if it does not contain the support of any
linear conservation.

\EEN
\eeD

 Siphons  are also related to the existence and stability of fixed points on the boundary of the positive orthant.
Indeed, Angeli et al.~\cite{AdLS} proved that if a boundary point is an $\omega$-$limit$ point, then the species set $\Sigma$ in which every species has zero concentration must be a siphon.

\beP[Boundary $\omega$-limits live in siphons]\lbl{p:AdLS} \cite[Prop 1]{AdLS}, \cite[Thm 2.5]{AndGAS}, \cite[Lem 2.1]{JohnSie}, \cite[Prop. 3.1]{FelShiu} If $w$ is an $\omega$-limit  of a strictly positive initial point for some choice of kinetics/rates, and belongs to a boundary face
$$w \in L_\Sigma:=\{X \in \mathbb{R}^{n}_{\geq 0}:x_i=0, i\in \Sigma\},$$
then $\Sigma$ is a siphon.
\eeP

Finally, we recall a concept that has been used an uncountable  number of times in ME, and which we believe relevant  also to CRNT.
\beD[regular splitting]
A matrix $J$ is said to admit   a regular splitting \cite{Varga} whenever $\exists (F,V)$ such that
$J=F-V$,  $V$ is invertible, and $F,V^{-1}$ are \nne.
\eeD

Our new results, package and  open problems come from the following circle of ideas:  \BEN \im For positive ODEs, the boundary faces defined by the zero coordinates of ``reasonable" boundary fixed points in the sense of \cite{AdLS} are invariant, and may be efficiently computed, by algorithms developed in the theory of Petri nets, under the name of siphons.
 \im  In mathematical epidemiology, \emph{epidemiologic strains} correspond conceptually to    \emph{minimal siphons}. \im The totality of ``infected variables" is the union of all minimal siphons,  the associated  face is the intersection of all invariant boundary faces, and usually it contains a unique  \DFE\ (DFE).
 \im The classic result of \cNGM\ may be decomposed in two: \BEN \im A lower triangular structure of the Jacobian on any siphon face, established below in Section \ref{s:NGM}, which implies the reduction of the stability problem to that of stability of its diagonal blocks.
 \im The   existence for ME models of regular splittings $Jac=F-V$ \cite{Diek,Van}, which allow defining  next generating matrices, and  characterize the
  stability at the DFE (and at any fixed point on a siphon, cf. our Theorem 1),  by a (non-unique) inequality $\rho(F V^{-1})<1$.
   \EEN
  \im The positive eigenvalues of the NGM on  siphons, expressed as functions of the resident (non-siphon) variables, called reproduction functions in \cite{AAN,AAHJ,AABH25,AH},  characterize often existence and invasion domains of  \bfp s; this is an observation for which  a precise statement has not yet emerged, and it can be made also for the \spf\ model from the last  section.


 \EEN

\section{Generalization of the NGM theorem: Boundary-face invariance forces $J_{x y}=0$ for positive ODEs}\label{s:NGM}
{
\beT [NGM theorem for forward invariant faces/siphons]\label{t:NGM}
(generalizing \cite{Diek,Van,Van08,JA})

Let $(x,y)\in\mathbb{R}^{m}_{\ge 0}\times\mathbb{R}^{n}_{\ge 0}$ and let
\[
\dot x = f_x(x,y),\qquad \dot y = f_y(x,y)
\]
be a  positive vector field, admitting a representation \eqr{Ga}. 

Assume that the face $F_x:=\{x=0\}$ is {\bf invariant}, i.e.
\begin{equation}\label{eq:faceinv}
f_x(0,y)\equiv 0\quad\text{for all }y\in\mathbb{R}^n_{\ge 0}
\end{equation}
 (equivalently, $x$ is a siphon, because our system is a positive polynomial system of ODEs).
Then:\BEN \im The mixed Jacobian block $J_{xy}:=D_y f_x(0,y)$ vanishes identically on the face $F_x$:
\[
D_y f_x(0,y)=0\quad\text{for all }y\in\mathbb{R}^n_{\ge 0},
\]
and at any point $(0,y^\ast)$ on the face, the Jacobian has block lower-triangular form
\[
J(0,y^\ast)=
\begin{pmatrix}
J_x & 0\\
* & J_y
\end{pmatrix}.
\]
\im A fixed point on the face $F_x$ is stable iff the Jacobians $J_x ,
 J_y$ at that point are both stable.
 \im If furthermore $J_x=F-V$, where $(F,V)$ is a \regS\ (\ie\ $V$ is invertible and $F,V^{-1}$ are \nne),
 then a fixed point on the face $F_x$ is stable if
 $\rho(F V^{-1})<1$, and unstable if
 $\rho(F V^{-1})>1$ \cite{Varga}.

\EEN
\eeT

\begin{proof}
A) For a warm-up, we start by the case  when $f_x(x,y)$ is polynomial.
Fix $i\in\{1,\dots,m\}$ and write the $i$th component of $f_x$ as a polynomial
\[
(f_x)_i(x,y)=\sum_{\alpha\in\mathbb{N}^{m},\;\beta\in\mathbb{N}^{n}}
c_{\alpha,\beta}\,x^\alpha y^\beta,
\qquad
x^\alpha:=\prod_{p=1}^{m} x_p^{\alpha_p},\quad
y^\beta:=\prod_{q=1}^{n} y_q^{\beta_q},
\]
with real coefficients $c_{\alpha,\beta}$ and only finitely many nonzero terms.
Evaluating at $x=0$ yields
\[
(f_x)_i(0,y)=\sum_{\beta\in\mathbb{N}^{n}} c_{0,\beta}\,y^\beta,
\]
because every monomial with $\alpha\neq 0$ vanishes when $x=0$.

By the invariance hypothesis \eqref{eq:faceinv}, $(f_x)_i(0,y)\equiv 0$ for all $y\in\RR^n_{\ge 0}$.
In particular, it vanishes on the open set $\RR^n_{>0}\subset\RR^n$.
Since a real polynomial that vanishes on a nonempty open subset of $\RR^n$ is identically zero, it follows that $c_{0,\beta}=0$ for all $\beta$.

 Hence every nonzero monomial in $(f_x)_i$ has
$\alpha\neq 0$, i.e. it contains at least one factor $x_p$. Equivalently, there exists polynomials $h_{i,p}(x,y)$ such that
\[
(f_x)_i(x,y)=\sum_{p=1}^{m} x_p\,h_{i,p}(x,y),
\]
and in particular $(f_x)_i(x,y)$ is divisible by each $x_p$ that appears in every monomial.

Now fix $j\in\{1,\dots,n\}$. Differentiating termwise gives
\[
\frac{\partial (f_x)_i}{\partial y_j}(x,y)=\sum_{\alpha,\beta} c_{\alpha,\beta}\,\beta_j\,x^\alpha y^{\beta-e_j},
\]
where $e_j$ is the $j$th standard basis vector of $\mathbb{N}^n$ and terms with $\beta_j=0$ contribute $0$.
Since every term in the sum has $\alpha\neq 0$, each monomial still contains a factor $x_p$, hence
\[
\frac{\partial (f_x)_i}{\partial y_j}(0,y)=0\qquad \text{for all }y.
\]
As $i$ and $j$ were arbitrary, this proves $D_y f_x(0,y)=0$.

Finally, the Jacobian at $(0,y^\ast)$ splits into blocks
\[
J(0,y^\ast)=
\begin{pmatrix}
D_x f_x(0,y^\ast) & D_y f_x(0,y^\ast)\\
D_x f_y(0,y^\ast) & D_y f_y(0,y^\ast)
\end{pmatrix}
=
\begin{pmatrix}
J_x & 0\\
* & J_y
\end{pmatrix},
\]
which is block lower-triangular, as claimed.

B) We turn now to the general case (where we assume only the existence of a representation  \eqr{Gam}).

Since $f$ is $C^1$, its Jacobian at $E_\Sigma$ may be written in block form
\[
Df(E_\Sigma)=
\begin{pmatrix}
D_{x_\Sigma}f_{x_\Sigma}(E_\Sigma) & D_{y_\Sigma}f_{x_\Sigma}(E_\Sigma)\\
D_{x_\Sigma}f_{y_\Sigma}(E_\Sigma) & D_{y_\Sigma}f_{y_\Sigma}(E_\Sigma)
\end{pmatrix}.
\]
Thus it remains only to prove that
\[
D_{y_\Sigma}f_{x_\Sigma}(E_\Sigma)=0.
\]

By invariance of the face, one has
\[
f_{x_\Sigma}(0,y_\Sigma)\equiv 0
\qquad\text{for all }y_\Sigma\ge 0.
\]
Therefore the map
\[
y_\Sigma \longmapsto f_{x_\Sigma}(0,y_\Sigma)
\]
is identically zero on a neighbourhood of $y_\Sigma^*$. Differentiating with
respect to $y_\Sigma$ at $y_\Sigma=y_\Sigma^*$ gives
\[
D_{y_\Sigma}f_{x_\Sigma}(0,y_\Sigma^*)=0.
\]
Hence
\[
Df(E_\Sigma)=
\begin{pmatrix}
D_{x_\Sigma}f_{x_\Sigma}(E_\Sigma) & 0\\
D_{x_\Sigma}f_{y_\Sigma}(E_\Sigma) & D_{y_\Sigma}f_{y_\Sigma}(E_\Sigma)
\end{pmatrix},
\]
which is exactly
\[
Df(E_\Sigma)=
\begin{pmatrix}
J_\Sigma^\perp & 0\\
* & J_{\mathrm{tan}}(\Sigma)
\end{pmatrix}.
\]

\end{proof}

\beR \label{rem:folk_theorem}
The block-triangular structure of the Jacobian on the DFE face can be read between the lines of classical mathematical epidemiology papers \cite{Diek, Van,Van08}; these authors probably viewed this fact as well-known and not worth mentioning. Indeed, from a classical dynamical systems perspective, the vanishing of the mixed Jacobian block is a known geometric consequence of evaluating a vector field's linearization on an invariant coordinate subspace (see, e.g., Hirsch, Smale, and Devaney \cite{HSD12}, or Wiggins \cite{Wig03}).

However, the fact that this property holds at \textit{any} ``reasonable'' boundary equilibrium in the sense of Proposition \ref{p:AdLS} is hard to intuit without a knowledge of CRNT, which goes beyond the general dynamical systems works cited above by explicitly dealing with the boundary faces of the positive orthant. This perspective allowed us to provide a natural set of assumptions for the classical NGM result.
\eeR

\beD [ME siphon-faces]
We will call faces where the Jacobian $J_x$  admits a regular splitting
ME siphon-faces.
\eeD

Since  the DFE-face (we recall that the DFE-face is the intersection of all the invariant faces) is a ME siphon-face for all the ME models we know, and in fact, we proposed in \cite{AABH25} to adopt this as one of the conditions for defining ME models, we ask now to replace this existence assumption  by more explicit conditions:
\beO [conditions for the DFE  face to  admit  a regular splitting]
Provide conditions for the Jacobian $J_x$ to admit a \regS\ on the DFE face.
\eeO

\section{An example of ME model, SIRWS (4 species, 10 reactions), analyzed by CRN methods:{\bf SIRWS.wl}}\lbl{s:SIRWS}

We introduce here an  epidemiological model which will serve us to illustrate the
utility of CRN methods. 

The ODE \begin{align}
\dot S &= \Lambda + \gamma_w W - \beta S i - \mu_s S \\
\dot i &= i(\beta S - \gamma_i - \mu_i) \\
\dot R &= \gamma_i i + \nu i W - (\mu_r + \gamma_r) R \\
\dot W &= \gamma_r R - \nu i W - \gamma_w W - \mu_w W
\end{align}
has species $\text{var} = \{S, i, R, W\}$ ($n=4$).

\beR We adopted the mixing of small letters and capitals  here and throughout since the latter, while being the traditional choice in ME, conflict with several reserved 
capitals in Mathematica, and hence create integration problems with our package. \eeR

\texttt{ODE2RN} decomposes this into $m = 10$ reactions:

\begin{center}
\begin{tabular}{clcl}
\hline
$j$ & Reaction & Rate $v_j$ & Description \\
\hline
1 & $0 \to S$           & $\Lambda$         & immigration \\
2 & $W \to S$           & $\gamma_w W$      & partial loss of  immunity \\
3 & $S + i \to 2i$      & $\beta S i$       & infection \\
4 & $i \to R$           & $\gamma_i i$      & recovery \\
5 & $i + W \to i + R$   & $\nu\, i\, W$     & boosting \\
6 & $R \to W$           & $\gamma_r R$      & total loss of  immunity \\
7 & $S \to 0$           & $\mu_s S$         & $S$ death \\
8 & $i \to 0$           & $\mu_i i$         & $i$ death \\
9 & $R \to 0$           & $\mu_r R$         & $R$ death \\
10 & $W \to 0$          & $\mu_w W$         & $W$ death \\
\hline
\end{tabular}
\end{center}

The unique minimal siphon is $i$.

\section{Compliments to \RH\ analysis:  the Child Selections Cauchy Binet expansion \cite{VasGB,VasSta,Vas,BSV,GSV} and the Banaji inheritance theorems \cite{Banaji,GutShiu,BBH25}}\label{s:CS}
Since the \eval s are rarely available in symbolic   ME models,  the \RH\ necessary and sufficient conditions for stability use instead the coefficients of the \ch. Still, the totality of the \RH\  conditions  leads  to complicated  computations which are typically beyond the power of nowadays computers, already in dimension 4 (especially due to the Hurwitz determinants). The limitations of \RH\ make   particularly attractive the  idea of {\bf ``Detailed-Routh-Hurwitz"} analysis (DRH), namely of extracting information from  partial summaries of the Jacobian matrix, like for example its
determinant.

This ties well to   Ivanova's idea~\cite{Ivanova}
of  establishing results like existence of Hopf bifurcations, chaos, etc first for smaller sub-networks, and then use  recipes which
guarantee inheritance to larger networks. Note this quest exists also in the general dynamical systems literature under the name of ``lifting bifurcations from invariant subspaces to the full system" \cite{Soares}.
This idea is  facilitated by using  a {\bf symbolic Jacobian} with functional increasing rates, and  decomposes its coefficients via the Cauchy-Binet formula (CB) into negative and positive summands (each summand being the determinant of a minor of the stoichiometric matrix); subsequently, special attention is paid to
 the  \emph{negative}  summands, which
``oppose the \RH\ criterion".

The relevant  minors, the ``Child Selections" \cite{Vas,VasSta} are square sub-matrices  of the \sm, rearranged to have  on the diagonal a reaction which produces the corresponding species, which translates into a negative sign for autocatalytic reactions (this reordering  relates them to the theory of Perron-Frobenius for Metzler matrices
    with nonnegative off-diagonal entries).

Conveniently, the signs of  the determinants/summands may be obtained via  symbolic decomposition of the   coefficients of the \ch\ of the {\bf symbolic Jacobian} into positive and negative terms \cite{Vas,VasSta,GSV}.
Amazingly, this approach yields existence of bifurcation results under any {\bf \rk\ \para s} (which include \MM, Hill, etc); however, for mass-action models,  further  work \cite{VasM}, or application of  inheritance extension criteria \cite{Banaji,GutShiu,BBH25,BorRost}, is required.

The idea of using inheritance ideas in ME was already illustrated in particular examples in \cite{VAA} and  Boros and Rost \cite{BorRost}. In this  section, we provide first a tutorial on how  possible network reductions may be obtained automatically  via the BiRNe method, and then illustrate on the SIRWS example how these may suggest
rigorous reductions  for mass-action models.

\paragraph{From critical fragments to child selections: a historical overview}

The idea of explaining oscillations and multistationarity in chemical reaction
networks by isolating small destabilizing substructures has evolved gradually,
with increasing levels of formalization. The following paragraph reviews that evolution,
clarifying terminology and attribution.

\BEN \im {\bf The critical fragments of Ivanova, and Mincheva--Roussel.}
The earliest  formulation of the idea is due to Ivanova, followed by Mincheva and Roussel
\cite{Ivanova,MinRou,MinRoub}.
Working directly at the level of the Jacobian matrix, they introduced the notion
of a \emph{critical fragment}: a minimal subnetwork whose contribution to the
Jacobian is responsible for a loss of stability; the surrounding network may be stabilizing, but cannot compensate for the
instability generated by the fragment.

\im {\bf Jacobian determinant expansions and matchings}
Subsequent work, notably by Mincheva, Banaji, Craciun \cite{MinCra,BanCra}, and collaborators, clarified
the algebraic structure underlying Jacobian minors and determinants.
Using Cauchy--Binet expansions, these works showed that determinants of Jacobian
sub-matrices can be written as sums indexed by matchings between species and
reactions.

\im {\bf Vassena--Stadler: child selections}
Vassena and Stadler introduced \emph{child selections} \cite{Vas,VasSta} as explicit combinatorial
objects indexing the terms appearing in Jacobian determinant expansions.
A child selection is a bijection between a set of species and a set of reactions
in which they appear as reactants.

This formalization represents a conceptual shift: destabilizing mechanisms are
no longer detected indirectly via Jacobian minors, but directly through
well-defined combinatorial structures.
In particular, the sign of the determinant associated with a child selection
can be read off from the stoichiometric submatrix it defines.

\im \cite{BSV}  introduce the notion of an \emph{oscillatory core}, combining:
(i) a destabilizing UPF, and
(ii) a larger child selection whose associated Jacobian block is Hurwitz-stable
and contains the UPF as a principal substructure.

\im {\bf The BiRNe framework} (Golnik, Gatter, Stadler, Vassena) \cite{GSV} builds on the child
selection formalism and provides an algorithmic procedure to enumerate such
structures.

\EEN

\beR
Inheritance results for reaction networks explain how oscillatory
mechanisms identified in small cores persist under biologically motivated model
extensions.
In this sense, inheritance theory complements the critical fragment / child
selection program by transferring local bifurcations from analytically tractable
cores to larger networks.
\eeR

\ssec{Summary of the Vassena-Stadler approach to  instability via   Cauchy--Binet expansions of symbolic Jacobians}\label{s:VS}

   \paragraph{CS-submatrices.}
   Given a CRN with $n$ species, and $n_r$ reactions,
let $\alpha=(s_i^j)$($n \times n_r$) and $\tilde\alpha=(\tilde s_i^j)$ ($n \times n_r$) denote the reactant and product matrices,
and let $G:=\tilde\alpha-\alpha$ ($n \times n_r$) denote the net stoichiometric matrix.

Let $ c_k$ denote the coefficients of the characteristic polynomial $\det(\Lambda I - G) = \Lambda^n + c_1 \Lambda^{n-1} + \cdots + c_n$.

By an elegant rewriting of the  Cauchy--Binet formula,  called Child-Selection (Theorem~5.5 in~\cite{Vas}),  it turns out that the coefficients $ c_k$ may be written in a form which makes the permutations appearing in the Cauchy Binet formula unnecessary for a CRN (or for a positive ODE).  This requires introducing:

\beD[Child selections as matchings in the reactivity matrix]
\label{def:CS-matching}
Let $\mathcal{N}$ be a reaction network with species set $M$ and reaction set $E$.
Let $\alpha=(s_i^j)_{i\in M,\,j\in E}$ be the reactant matrix and let
$R=(R_{j i})_{j\in E,\,i\in M}$ be the \emph{reactivity  matrix}, whose elements
are formal positive variable
$r_{j i}$ which
replace the rates derivatives in the symbolic Jacobian, and note that
\[
R_{j i}\neq 0 \quad\Longleftrightarrow\quad s_i^{j}>0.
\]

A \emph{$k$-child selection} (or $k$-CS) is a triple $\kappa=(\kappa,E_\kappa,J)$ such that
\[
\kappa\subseteq M,\quad E_\kappa\subseteq E,\quad |\kappa|=|E_\kappa|=k,
\]
and $J:\kappa\to E_\kappa$ is a bijection satisfying
\[
R_{J(i),i}\neq 0 \Eq s_i^{J(i)}>0
\qquad\forall\,i\in\kappa.
\]
We call $J$ the \emph{child-selection bijection} (matching species to reactions in which they
appear as reactants).
\end{definition}

Child -selections are implicit in the Cauchy–Binet formula.

\begin{lemma}[Cauchy--Binet monomials are exactly child selections]
\label{lem:CB-CS-exact}
Fix $\kappa\subseteq M$ with $|\kappa|=k$, and consider a $k\times k$ principal minor
$\det(G[\kappa])$.
\BEN
\im
For any $E_\kappa\subseteq E$ with $|E_\kappa|=k$, each term
$\det\!\bigl(R[E_\kappa,\kappa]\bigr)$ in the Cauchy--Binet expansion
\[
\det(G[\kappa])
=\sum_{E_\kappa\subseteq E,\ |E_\kappa|=k}
\det\!\bigl(\Gamma[\kappa,E_\kappa]\bigr)\;
\det\!\bigl(R[E_\kappa,\kappa]\bigr)
\]
contains monomials of the form
\[
\prod_{i\in\kappa} r_{J(i),i},
\]
where $J:\kappa\to E_\kappa$ is a bijection. Such a monomial is nonzero if and only if
$J$ is a child-selection bijection, i.e.\ if and only if $s_i^{J(i)}>0$ for all
$i\in\kappa$.

\im
Expanding $\det(R[E_\kappa,\kappa])$ by Leibniz yields
\[
\det\!\bigl(R[E_\kappa,\kappa]\bigr)
=
\sum_{J:\kappa\to E_\kappa\ \mathrm{bij.}}
\operatorname{sign}(J)\prod_{i\in\kappa} R_{J(i),i}.
\]
For each child selection $(\kappa,E_\kappa,J)$, let $P_J$ be the permutation matrix
associated with $J$, and set
\[
\Gamma[\kappa,J]:=\Gamma[\kappa,E_\kappa]\,P_J.
\]
Then
\[
\det(\Gamma[\kappa,J])
=
\operatorname{sign}(J)\det(\Gamma[\kappa,E_\kappa]).
\]
Hence the Cauchy--Binet expansion of $\det(G[\kappa])$ may be written as a sum over
child selections:
\[
\det(G[\kappa])
=
\sum_{(\kappa,E_\kappa,J)\in CS(\kappa)}
\det(\Gamma[\kappa,J])\prod_{i\in\kappa} R_{J(i),i},
\]
where $CS(\kappa)$ denotes the set of child selections supported on $\kappa$.
Thus the permutation signs are absorbed into the reordered stoichiometric determinants.
\EEN
\end{lemma}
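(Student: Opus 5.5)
The plan is to apply the Cauchy--Binet formula to the factorization $G=\Gamma\,R$, then expand the reactivity minors by Leibniz and regroup terms so that each surviving monomial is labelled by a child selection.

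\emph{Step 1: Cauchy--Binet on the principal minor.} Since $G=\Gamma R$ with $\Gamma$ of size $n\times n_r$ and $R$ of size $n_r\times n$, the principal submatrix is $G[\kappa]=\Gamma[\kappa,E]\,R[E,\kappa]$. The Cauchy--Binet formula for a $k\times n_r$ times $n_r\times k$ product gives
\[
\det(G[\kappa])=\sum_{E_\kappa\subseteq E,\ |E_\kappa|=k}\det(\Gamma[\kappa,E_\kappa])\det(R[E_\kappa,\kappa]),
\]
which is the displayed expansion. To fix the index conventions, list the rows of $\Gamma[\kappa,E_\kappa]$ in the increasing order of $\kappa$ and its columns in the increasing order of $E_\kappa$. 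Use the same orders for $R[E_\kappa,\kappa]$.

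\emph{Step 2: Leibniz expansion and the nonvanishing criterion.} Expand $\det(R[E_\kappa,\kappa])$ by Leibniz, indexing the permutations by bijections $J:\kappa\to E_\kappa$. Here $\operatorname{sign}(J)$ is the sign of the permutation obtained by composing $J$ with the two order-preserving identifications of $\kappa$ and $E_\kappa$ with $\{1,\dots,k\}$. Each term is $\operatorname{sign}(J)\prod_{i\in\kappa}R_{J(i),i}$.

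Because the $r_{ji}$ are formal indeterminates and $R_{ji}\neq0$ exactly when $s_i^j>0$, the product is nonzero if and only if $s_i^{J(i)}>0$ for all $i\in\kappa$. This is precisely the condition that $J$ be a child-selection bijection, which proves part 1.

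\emph{Step 3: absorbing the signs.} Let $P_J$ be the $k\times k$ permutation matrix that reorders the columns of $\Gamma[\kappa,E_\kappa]$ so that the $i$-th column becomes the reaction $J(i)$. Then the diagonal of $\Gamma[\kappa,J]=\Gamma[\kappa,E_\kappa]P_J$ pairs each species with its selected reaction. We have $\det P_J=\operatorname{sign}(J)$, so multiplicativity of the determinant gives $\det(\Gamma[\kappa,J])=\operatorname{sign}(J)\det(\Gamma[\kappa,E_\kappa])$.

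Substituting into the double sum over $E_\kappa$ and $J$, and dropping the zero terms from Step 2, yields the sum over $CS(\kappa)$. The sum stays over all pairs and no terms are merged, since distinct pairs $(E_\kappa,J)$ give distinct monomials. Those monomials are distinct because the triple is recoverable from the set of factors $r_{J(i),i}$.

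\emph{Main obstacle.} The only delicate point is the sign bookkeeping in Step 3. One must check that the sign of $P_J$ coincides with the sign used in the Leibniz expansion, i.e. that both refer to the same identification of $\kappa$ and $E_\kappa$ with $\{1,\dots,k\}$. The intended $P_J$ maps the column of $E_\kappa$ at ordered position $p$ to position $q$ exactly when $J(\kappa_q)$ is the $p$-th element of $E_\kappa$. I would verify this directly: the diagonal of $\Gamma[\kappa,E_\kappa]P_J$ then consists of the entries $\Gamma_{i,J(i)}$, and $\det P_J$ is the sign of the permutation $q\mapsto p$, which is exactly $\operatorname{sign}(J)$.
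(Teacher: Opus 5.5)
Your proposal is correct and follows the same route as the paper: you Leibniz-expand $\det(R[E_\kappa,\kappa])$, use $R_{ji}\neq 0\iff s_i^j>0$ to identify the surviving terms with child-selection bijections, and absorb the signs via $\det P_J=\operatorname{sign}(J)$ and multiplicativity of the determinant. Your extra details are refinements of points the paper leaves implicit: deriving the principal-minor Cauchy--Binet identity from $G[\kappa]=\Gamma[\kappa,E]\,R[E,\kappa]$, fixing the orderings so that $\det P_J$ matches the Leibniz sign, and noting that distinct triples give distinct monomials.
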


\begin{proof}
By definition of $R$, one has
\[
R_{ji}\neq 0 \iff s_i^j>0.
\]
Therefore, in the Leibniz expansion
\[
\det\!\bigl(R[E_\kappa,\kappa]\bigr)
=
\sum_{J:\kappa\to E_\kappa\ \mathrm{bij.}}
\operatorname{sign}(J)\prod_{i\in\kappa} R_{J(i),i},
\]
the term associated with a bijection $J$ is nonzero if and only if
$R_{J(i),i}\neq 0$ for all $i\in\kappa$, equivalently if and only if
$s_i^{J(i)}>0$ for all $i\in\kappa$. This is exactly the condition that
$J$ be a child-selection bijection.

Now fix such a child selection $(\kappa,E_\kappa,J)$ and let $P_J$ be the corresponding
permutation matrix. Since
\[
\Gamma[\kappa,J]=\Gamma[\kappa,E_\kappa]P_J,
\]
we have
\[
\det(\Gamma[\kappa,J])
=
\operatorname{sign}(J)\det(\Gamma[\kappa,E_\kappa]).
\]
Multiplying by $\prod_{i\in\kappa}R_{J(i),i}$ shows that the contribution of
$(\kappa,E_\kappa,J)$ to the Cauchy--Binet expansion is
\[
\det(\Gamma[\kappa,J])\prod_{i\in\kappa}R_{J(i),i}.
\]
Summing over all $E_\kappa$ and all child-selection bijections $J:\kappa\to E_\kappa$
gives the stated expansion of $\det(G[\kappa])$ as a sum over child selections, with
the permutation signs absorbed into the reordered determinants.
\end{proof}

The signs of minors  are relevant for the stability of subnetworks due to the following \wk\ algebra  fact:
\beL[Determinant sign and instability mechanisms]
\label{lem:det-sign-instability}
Let $A\in\mathbb{R}^{k\times k}$ with $\det A\neq 0$, and define
\[
\sigma(A):=\operatorname{sign}\!\bigl((-1)^k\det A\bigr)\in\{-1,+1\}.
\]
Then:
\begin{enumerate}
\item If $\sigma(A)=-1$ (i.e.\ $(-1)^k\det A<0$), then $A$ is not Hurwitz stable.
Moreover, $A$ has at least one eigenvalue with $\Re\Lambda>0$.
\item If $\sigma(A)=+1$ (i.e.\ $(-1)^k\det A>0$), then the sign condition alone does not
decide stability. However, if $A$ is unstable, then instability cannot occur through an
\emph{odd} number of positive real eigenvalues. More precisely, at least one holds:
\begin{itemize}
\item[(i)] $A$ has a nonreal eigenvalue with $\Re\Lambda>0$ (hence a complex conjugate
pair in the open right half-plane);
\item[(ii)] $A$ has an \emph{even} number of real eigenvalues with $\Re\Lambda>0$ (counted
with algebraic multiplicity).
\end{itemize}
\end{enumerate}
\eeL

\begin{proof}
Let $\Lambda_1,\dots,\Lambda_k$ be the eigenvalues of $A$ with algebraic multiplicity.
Since $A$ is real, nonreal eigenvalues occur in conjugate pairs, each contributing a positive
factor $|\Lambda|^2$ to $\det A$.
If $A$ were Hurwitz stable, then every real eigenvalue would be negative and thus
$\operatorname{sign}(\det A)=(-1)^m$ where $m$ is the number of real eigenvalues; moreover
$k-m$ is even, so $m\equiv k\pmod 2$, hence $(-1)^k\det A>0$. This proves (1) by
contradiction; as $\det A\neq 0$, no eigenvalue is $0$, so non-Hurwitz implies some
$\Re\Lambda>0$.

For (2), suppose $A$ is unstable. If it has a nonreal eigenvalue with $\Re\Lambda>0$, then (i)
holds. Otherwise all eigenvalues with $\Re\Lambda>0$ are real and positive. Let $p$ be the number
of positive real eigenvalues and $q$ the number of negative real eigenvalues (with multiplicity).
Nonreal eigenvalues contribute positive factors to $\det A$, hence $\operatorname{sign}(\det A)=(-1)^q$.
Also $k-(p+q)$ is even, so $k\equiv p+q\pmod 2$, hence $k+q\equiv p\pmod 2$.
Therefore $\sigma(A)=\operatorname{sign}((-1)^k\det A)=(-1)^{k+q}=(-1)^p$.
Under $\sigma(A)=+1$ this forces $p$ even, proving (ii).
\end{proof}

\beR
 Within this framework, the sign of the sign-modified determinant
$\operatorname{sign}((-1)^k\det Gamma[\kappa])$ separates CS-submatrices that
\emph{cannot be Hurwitz stable} (negative sign) from those for which Hurwitz stability is not
excluded by determinant sign alone (positive sign). The refinement ``exactly one unstable real
eigenvalue" versus ``exactly one unstable complex pair" requires an additional minimality
assumption (unstable core), stated in Lemma \ref{lem:UPF-NF-spectrum}. \eeR

\subsection{The UPF/NF dichotomy, unstable cores and spectral consequences}

In this section  we introduce terminology to classify child selections by determinant sign,
and examine the consequences of this classification on minimal unstable substructures.

\beD[Positive and negative feedbacks]
Let $\kappa$ be a $k$-child-selection and let $A := Gamma[\kappa]$ be the associated
$k\times k$ CS-submatrix.
\begin{itemize}
\item $\kappa$ (or $A$) is called a \emph{negative feedback} (NF) if
\[
\operatorname{sign}\!\bigl((-1)^k \det A\bigr) = +1
\]
(see Lemma \ref{lem:UPF-NF-spectrum}).
\item $\kappa$ (or $A$) is called an \emph{unstable positive feedback} (UPF) if
\[
\operatorname{sign}\!\bigl((-1)^k \det A\bigr) = -1
\]
i.e. if the matrix $A$ is Hurwitz unstable (see Lemma \ref{lem:det-sign-instability}).
\end{itemize}
\eeD

\beR The UPF/NF dichotomy is purely algebraic and applies to arbitrary child
selections. The difference in  names and abbreviations between the two cases reflects the different spectral behaviors stated in Lemma \ref{lem:det-sign-instability}: an UPF cannot be Hurwitz stable, whereas a NF is not excluded from
being Hurwitz stable by determinant sign alone.
\eeR

\beD[Unstable core]
Let $\kappa$ be a $k$-child-selection and let $A := Gamma[\kappa]$ be the associated
CS-submatrix.
We call $A$ an \emph{unstable core} if:
\begin{itemize}
\item $A$ is unstable (i.e.\ has at least one eigenvalue with $\Re\Lambda>0$), and
\item every proper principal submatrix of $A$ is Hurwitz stable.
\end{itemize}
\eeD

\medskip

Under the additional assumption that a UPF or NF arises from an unstable core,
the determinant-sign dichotomy acquires further spectral content.
\begin{lemma}[Instability mechanisms for unstable cores]
\label{lem:UPF-NF-spectrum}
Let $A$ be an unstable core of size $k$.

\begin{enumerate}
\item If $\operatorname{sign}((-1)^k\det A)=-1$ (UPF case), then $A$ has an odd number of
real eigenvalues with $\Re\Lambda>0$. In particular, instability may occur through a real
positive eigenvalue.

\item If $\operatorname{sign}((-1)^k\det A)=+1$ (NF case), then instability cannot occur
through an odd number of real positive eigenvalues. Since $A$ is an unstable core, this
forces instability to occur through exactly one complex conjugate pair with $\Re\Lambda>0$.
\end{enumerate}
\end{lemma}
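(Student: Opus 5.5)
The parity statements follow directly from Lemma~\ref{lem:det-sign-instability}; the extra content, \emph{exactly one} unstable mode in the NF case, I would get from a one-parameter continuation that uses the unstable-core hypothesis.

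\emph{Step 1 (parity).}
\begin{itemize}
\item Since $A$ is unstable and every proper principal submatrix is Hurwitz, I first check that $\det A\neq 0$. If $\det A=0$, the sign $\operatorname{sign}((-1)^k\det A)$ would be undefined, so the hypothesis of either case already excludes this. Lemma~\ref{lem:det-sign-instability} therefore applies.
\item From its proof, $\operatorname{sign}((-1)^k\det A)=(-1)^p$, where $p$ is the number of positive real eigenvalues counted with multiplicity.
\item In the UPF case this gives $p$ odd, so $p\ge1$. This proves item~1.
\item In the NF case it gives $p$ even. Instability then comes either from an even number $p\ge 2$ of positive real eigenvalues or from a nonreal pair.
\end{itemize}

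\emph{Step 2 (continuation in one diagonal entry).} Choose an index $\ell$ and set $A(s):=A-sE_{\ell\ell}$ for $s\ge 0$.
\begin{itemize}
\item As $s\to\infty$, one eigenvalue behaves like $-s$. The remaining $k-1$ eigenvalues converge to those of the principal submatrix $A[\hat\ell]$ obtained by deleting row and column $\ell$. This is a standard Schur-complement/perturbation argument.
\item $A[\hat\ell]$ is Hurwitz by the core hypothesis, so $A(s)$ is Hurwitz for $s$ large.
\item By Laplace expansion along row $\ell$, $\det A(s)=\det A-s\det A[\hat\ell]$. This is affine in $s$, and its slope $-\det A[\hat\ell]$ has sign $(-1)^k$ because $A[\hat\ell]$ is Hurwitz.
\item In the NF case, $(-1)^k\det A(s)>0$ for all $s\ge0$. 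Hence no real eigenvalue ever crosses $0$ along the path.
\item Every change in the number of unstable eigenvalues between $s=\infty$ (none) and $s=0$ must therefore occur through pairs $\pm i\omega$, $\omega>0$, crossing the imaginary axis.
\item In particular the count $p$ of positive real eigenvalues can only change through collisions of complex pairs on the real axis. I would argue $p=0$ at $s=0$ by combining this with Step~3.
\end{itemize}

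\emph{Step 3 (exactly one pair: the main obstacle).} The delicate point is ruling out two or more unstable complex pairs, or an even number $p\ge2$ of real positive eigenvalues. A single diagonal continuation does not obviously control how many pairs cross. My first attempt would be to run the continuation in every index $\ell$ at once and count unstable eigenvalues with the argument principle. I would use the fact that all $(k-1)$-principal minors of $A$ are Hurwitz, so the coefficients $c_1,\dots,c_{k-1}$ of the characteristic polynomial inherit sign information via the sums of principal minors. That leaves only the last Hurwitz determinants $H_{k-1}$ and $H_k$ able to fail. A Routh-table count of sign changes then yields at most two right-half-plane roots.

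If this Routh-array argument does not go through in full generality, I would fall back on a genericity/minimality reading of ``unstable core''. Under that reading, the core sits at a first loss of stability along a path from a stable matrix. The instability is then created by a single simple crossing, which is a complex pair by Step~2, as in the Vassena--Stadler setting of Section~\ref{s:VS}. I expect this step, and possibly a sharpening of the hypothesis, to be where the real work lies.
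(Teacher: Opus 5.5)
\textbf{Verdict: genuine gap in Step~3, and it cannot be closed, because the claim it targets is false.} Your Step~1 is correct and is exactly the paper's first sentence: parity follows from Lemma~\ref{lem:det-sign-instability}, with $\sigma(A)=(-1)^p$. The failure is the assertion in item~2 that an NF unstable core has \emph{exactly one} unstable complex pair.

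\textbf{Counterexample.} Take $k=7$, let $P$ be the cyclic permutation matrix ($P_{i,i+1}=1$, indices mod $7$), and set $A=-I-5P$. This is even a genuine CS-matrix, namely of the reactions $X_j+5X_{j-1}\to 0$.
\begin{itemize}
\item For every proper index set $S$, the digraph of $P[S]$ is a union of directed paths. Hence $P[S]$ is nilpotent and $A[S]$ has spectrum $\{-1\}$, so all proper principal submatrices are Hurwitz.
\item However, $\det(\lambda I-A)=(\lambda+1)^7+5^7$, so $(-1)^7\det A=1+5^7>0$, which is the NF case.
\item The eigenvalues are $-1+5e^{i\pi(2m+1)/7}$. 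Since $5\cos(\pi/7)>1$ and $5\cos(3\pi/7)\approx 1.11>1$, two conjugate pairs have positive real part.
\end{itemize}
This defeats your Routh-table bound. Positivity of $c_1,\dots,c_k$ gives no control over the intermediate Hurwitz determinants $H_2,\dots,H_{k-2}$, and here four roots lie in the right half-plane. It also shows that the paper's own one-line justification is invalid: it claims several unstable pairs ``would appear already in a proper principal submatrix,'' and here every proper principal submatrix has spectrum $\{-1\}$. Do not try to reproduce that argument. Your fallback only succeeds by changing the hypothesis.

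\textbf{What the core hypothesis does give.} It fixes the sign of every coefficient of $\det(\lambda I-A)=\lambda^k+c_1\lambda^{k-1}+\dots+c_k$.
\begin{itemize}
\item Each $j\times j$ principal submatrix with $j<k$ is Hurwitz, so its determinant has sign $(-1)^j$. Since $c_j$ is $(-1)^j$ times the sum of these minors, $c_j>0$ for $j<k$, while $c_k=(-1)^k\det A$.
\item In the NF case all coefficients are positive, so there is no eigenvalue in $[0,\infty)$. This gives $p=0$ immediately, making your continuation in Step~2 unnecessary. It also shows that instability occurs only through nonreal pairs: at least one, but not necessarily exactly one.
\item Exactly one pair does hold for $k\le 4$. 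For $k=3$, a non-Hurwitz cubic with positive coefficients has one negative root and one unstable pair. For $k=4$, two unstable pairs would force $\operatorname{tr}A>0$, contradicting the negative diagonal.
\item In the UPF case the coefficient sequence has a single sign change, so Descartes' rule gives exactly one positive real eigenvalue. This sharpens item~1.
\end{itemize}
The correct repair is to weaken the last sentence of item~2 to ``at least one unstable complex pair, and no nonnegative real eigenvalues,'' not to search for a cleverer proof.
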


\begin{proof}
The parity restrictions follow directly from Lemma~2.
Minimality excludes the possibility of multiple unstable real eigenvalues or multiple unstable
pairs, since any such configuration would appear already in a proper principal submatrix,
contradicting the definition of an unstable core.
\end{proof}

\ssec{Key results from~\cite{Vas} implemented in  BiRNe and EpidCRN}

The following key results from~\cite{Vas} form the basis of the BiRNe and EpidCRN implementations:

\begin{description}
\item[Corollary 5.12-5.14-5.11 (Instability-Stability).]

The existence of an UPF implies
that the network \emph{admits instability}
(there exist positive $r_{jk}$ making $G$ unstable).

In particular, autocatalytic networks admit instability.

\beR \emph{Stoichiometric autocatalysis}, as defined by Blokhuis et al. \cite{BLN}, has been proven equivalent to the presence of an unstable-positive feedback that is a Metzler matrix \cite[Thm. 7.3]{VasSta}.
\eeR

If one $M$-CS ($M = |\mathbf{M}|$, the total number of species) has a Hurwitz-stable
stoichiometric submatrix $S[\mathbf{J}^{(M)}]$, then the network \emph{admits stability}.
\ie\
\begin{equation}\label{badsign}
    \operatorname{sign}(-1)^{n}\alpha_{\mathbf{J}^{(n)}}=-1,
\end{equation}
Then the network admits instability (\ie\ .

\item[Corollary 5.7, 5.8 (Fixed sign).]
A coefficient $c_k$ is of \emph{fixed sign}  for all positive $\mathbf{r}'$
if and only if the {sign modified determinants} of order $k$
 share the same sign, \ie\ for any two $k$-child-selections $\mathbf{J}_1^{(k)}, \mathbf{J}_2^{(k)}$ we have
$\alpha_{\mathbf{J}_1^{(k)}} \alpha_{\mathbf{J}_2^{(k)}} \geq 0$.

 The Jacobian $G$ is a $P^{(-)}_0$ matrix for all positive $\mathbf{r}'$ if and only if for all Child-Selections $\mathbf{J}^{(n)}$ it holds
$$\operatorname{sign}(-1)^{n}\alpha_{\mathbf{J}^{(n)}}=1.$$

\item[Theorem 5.19 (Purely imaginary eigenvalues).]
Fix $\bar{n}$ and a species set $\bar{\mathbf{M}}^{(\bar{n})}$.
Let $\mathcal{C}$ be a collection of $\bar{n}$-CS such that either
$\operatorname{sign} (-1)^{\bar{n}} \alpha_{\mathbf{J}^{(\bar{n})}} = 1$ or
$\alpha_{\mathbf{J}^{(\bar{n})}} = 0$ for all $\mathbf{J}^{(\bar{n})} \in \mathcal{C}$.
Assume:
\begin{enumerate}
\item There exists $\mathbf{J}_1^{(\bar{n})} \in \mathcal{C}$ such that
$S[\mathbf{J}_1^{(\bar{n})}]$ is a stable matrix.
\item There exists an $n$-CS $\mathbf{J}_2^{(n)}$ ($n < \bar{n}$) with
$\mathbf{M}^{(n)} \subset \bar{\mathbf{M}}^{(\bar{n})}$,
$(m, \mathbf{J}_2^{(n)}(i)) \in \mathcal{P}$ for all $m \in \mathbf{M}^{(n)}$,
and $S[\mathbf{J}_2^{(n)}]$ is unstable.
\end{enumerate}
Then there exists a choice of symbols $\mathbf{r}'$ such that $G(\mathbf{r}')$ has
purely imaginary eigenvalues.

\end{description}

\medskip
\beD [oscillatory core recipes of Blokhuis-Stadler-Vassena]
\noindent
By \cite[Recipe I,II,0]{BSV}, RNs with oscillatory core of class I,II,0 admit non-stationary periodic orbits.
\BEN \im
Recipe I: An \textbf{Oscillatory Core of Class I}
 is a pair (UPF, stable super-CS) where the super CS is minimal with the property of being Hurwitz-stable and possessing an unstable-positive feedback as a principal submatrix.

 \im Recipe II: An \textbf{Oscillatory Core of Class II} is a minimal $k\times k$ CS-matrix which is a unstable and negative feedback, and  which has a $(k-1)\times (k-1)$ Hurwitz-stable principal sub-matrix.
 \im \textbf{Recipe 0}:
  A transition between regions of parameter space with stable and unstable
  steady states, under the assumption that the Jacobian remains invertible
  throughout, suggests bifurcation. For example, cf.
   \cite[Corollary 5.21]{Vas} (Fixed-sign determinant),
if $G$ has a nonzero determinant of fixed sign, and
if the network admits both stability and instability, then there exists $\mathbf{r}'$
such that $G(\mathbf{r}')$ has purely imaginary eigenvalues.
\EEN
\eeD

\beR
In the remarkable case in which the Jacobian $G$ is of fixed sign \cite{Hirsch,Smith,AngeliSontag2003,CF05,BanCra,Feinberg2019} (see also Section 7 in the ArXiv version),
Vassena's Corollary~5.21 provides
 sufficient conditions for Hopf
that do not require set inclusion between UPF and stable CS. It suffices:
\begin{itemize}
\item The network admits instability (e.g.\ via autocatalysis, Corollary~5.14).
\item The network admits stability (for example, by Corollary~5.11: some $M$-CS matrix is stable).
\end{itemize}

\eeR

  The \cV\ approach has been implemented in Python in the BiRNe package \cite{GSV}, under the assumption that the \sm\ is real, and ported now in our package, which allows symbolic \sm s.  Note that admissibility (of Hopf bifurcations, for example), implies existence under \rk\ \para s (\MM, Hill, etc), but further example specific work is required for semi-parametric cases \cite{VAA}, or  parametric cases like mass-action models \cite{VasM}; alternatively, one may apply for mass-action inheritance extension criteria \cite{Banaji,GutShiu,BBH25,BorRost}.    We hope to  illustrate in further work that these approaches can  be  useful in \ME.

\paragraph{Computational output  for SIRWS.}
Characteristic polynomial coefficients $c_1, \ldots, c_4$ have
$\{10, 35, 51, 25\}$ monomials respectively, of which $\{1, 6, 10, 5\}$ are negative.

\BEN \im
There are 22 negative child selections ($1+6+10+5$ by order $k=1,2,3,4$);
13 are unstable positive feedbacks (UPFs) and 9 are negative feedbacks (NFs),
all UPFs containing the autocatalytic infection reaction $S+i\to 2i$.

\noindent\emph{Example UPF} (2-CS, species $\{i,R\}$, reactions $\{\beta Si,\,\mu_r R\}$):
\[
\Gamma[\{i,R\},\{3,9\}]
= \begin{pmatrix} 1 & 0 \\ 0 & -1 \end{pmatrix},
\qquad \det = -1,\quad (-1)^{2}\det = -1 < 0 \;\Rightarrow\; \text{UPF}.
\]
Eigenvalues $+1$ and $-1$: one real positive eigenvalue, confirming instability
(Lemma~\ref{lem:det-sign-instability}(1)).

\noindent\emph{Example NF} (2-CS, species $\{S,i\}$, reactions $\{\beta Si,\,\mu_s S\}$):
\[
\Gamma[\{S,i\},\{3,7\}]
= \begin{pmatrix} -1 & -1 \\ 1 & 0 \end{pmatrix},
\qquad \det = +1,\quad (-1)^{2}\det = +1 > 0 \;\Rightarrow\; \text{NF}.
\]
Eigenvalues $-\tfrac{1}{2}\pm\tfrac{\sqrt{3}}{2}\,i$: Hurwitz stable with a complex
conjugate pair, consistent with Lemma~\ref{lem:det-sign-instability}(2)
(stability is not excluded by the determinant sign, and here stability holds).
This NF 2-CS is precisely the minimal Hurwitz-stable superset of the UPF $\{i\}$
that constitutes the Recipe~I oscillatory core.

\im The
Hasse diagram has 71 inclusion edges before transitive reduction.
\EEN

  \begin{center}
  \begin{tabular}{clll}
  \hline
  \# & Species & Rates & Stable supersets \\
  \hline
  1 & $\{i\}$       & $\{\beta S i\}$                          & $\{S,i\},\; \{S,i,R,W\}$ \\
  2 & $\{i,R\}$     & $\{\beta S i,\; \mu_r R\}$               & $\{S,i,R,W\}$ \\
  3 & $\{i,W\}$     & $\{\beta S i,\; \mu_w W\}$               & $\{S,i,R,W\}$ \\
  4 & $\{i,R,W\}$   & $\{\beta S i,\; \mu_r R,\; \mu_w W\}$   & $\{S,i,R,W\}$ \\
  \hline
  \end{tabular}
  \captionof{table}{Four  pairs (UPF, minimal stable superset) found. All  share the infection autocatalysis reaction $S + i \to 2i$.}
  \label{t:UPF}
\end{center}

 \paragraph{Detailed analysis of pair \#4.}
  The stoichiometric submatrix of the UPF is:
  \[
  \Gamma[\{i,R,W\},\{3,9,10\}] = \operatorname{diag}(1,-1,-1),
  \qquad \det = +1 \;\Rightarrow\; \text{unstable positive feedback (UPF).}
  \]

  The unique Hurwitz-stable superset adds species $S$ matched to reaction $\mu_s S$ (death of susceptibles):
  \[
  \Gamma[\{S,i,R,W\},\{3,7,9,10\}]
  = \begin{pmatrix}
  -1 & -1 &  0 &  0 \\
   1 &  0 &  0 &  0 \\
   0 &  0 & -1 &  0 \\
   0 &  0 &  0 & -1
  \end{pmatrix}.
  \]
  This matrix is block-diagonal. The $\{S,i\}$ block
  $\bigl(\begin{smallmatrix} -1 & -1 \\ 1 & 0 \end{smallmatrix}\bigr)$
  has eigenvalues $-\tfrac{1}{2} \pm \tfrac{\sqrt{3}}{2}\,i$ (an oscillatory infection cycle),
  while the $\{R,W\}$ block is $-I_2$ (pure decay).
  However, these two blocks are not Oscillatory Core of Class~I, since they do not
  have an embedded UPF.

  Finally, $\Gamma[\{S,i,R,W\},\{3,7,9,10\}]$  is a minimal Hurwitz-stable superset containing an UPF as a principal submatrix,
   confirming again
    the structural capacity for Hopf bifurcation of the SIRWS model.

  The \texttt{oscRI} script (Recipe~I oscillatory cores) reports one minimal
  oscillatory core of Class~I:
\begin{verbatim}
Recipe I holds: 1 oscillatory core of Class I
RI#1 stable 2-CS: species={S,i}  reactions={be*i*S, mus*S}
     det=1  Sk= -1  -1
             1   0
  contains UPF 1-CS: species={i}  reactions={be*i*S}  det=1  Sk= 1
\end{verbatim}
  The Hurwitz-stable stoichiometric minor (columns in sorted reaction order
  $\{\beta Si,\,\mu_s S\}$) is $\Gamma[\{S,i\},\{3,7\}]=\bigl(\begin{smallmatrix}-1&-1\\1&0\end{smallmatrix}\bigr)$,
  with eigenvalues $-\tfrac{1}{2}\pm\tfrac{\sqrt{3}}{2}\,i$ (an oscillatory infection cycle;
  the same block appears in the $4\times 4$ analysis above).
  Together with the UPF $\{i\}$, this is an Oscillatory Core of Class~I, confirming that
  SIRWS admits Hopf.

\paragraph{Recipe~II oscillatory cores and fixed-sign analysis.}
  Running \texttt{oscRII[nCS, gH, spe, rts]} and \texttt{fixSgn[nCS, n]} on SIRWS
  (output to be verified by running the updated script):
\begin{verbatim}
Recipe II: no cores of Class II found
fixed sign by order= {{1,True,{1}},{2,False,{-1,1}},
                      {3,False,{-1,1}},{4,False,{-1,1}}}
\end{verbatim}
  The analytically certain entries are: $k=1$ has fixed sign (one UPF, $\det=1$);
  $k=2$ is not fixed sign (2 UPFs with $\det<0$ and 4 NFs with $\det>0$);
  $k=4$ is not fixed sign because the Hurwitz-stable $4\times 4$ NF child selection
  $\Gamma[\{S,i,R,W\},\{3,7,9,10\}]$ (with $\det=1>0$, odd bijection permutation
  sign) coexists with UPF $4$-CS having $\det<0$.
  The Jacobian therefore does not have a fixed-sign determinant, and
  \cite[Cor.~5.21]{Vas} does not apply via the fixed-sign route.
  The oscillatory conclusion rests on Recipe~I.

\subsection{Comments on deriving  heuristically     the Boros--Rost Hopf witness $\{W,S_2,i\}$}\label{s:BR}

We have learned at the recent DSABNS 2026 conference that Boros and Rost \cite{BorRost} have proved symbolically the presence of Hopf bifurcations for SIRWS, using a three dimensional mass-action ``Hopf witness" and inheritance.
Since we believe this kind of result provides a major argument in the favor of the unification
of ME and CRN we have been advocating for a while \cite{AAN,VAA,AAHJ,JA,AABH25},
 we have included some comments  about it in this section.
 
 \beR The most relevant UPFs for arriving to the  Boros-Rost Hopf witness of next section do not include the oscillatory core, but the maximal UPF and the third UPF $\{i,W\}$ found in Table \ref{t:UPF}.  The maximal one is where the oscillation might most easily appear, and the third one suggests that the oscillation might still be present after contracting/eliminating R, which turns out to be the case here once the loop is closed by putting S back.
 However, going from the rigorous information provided by BiRNe to potential Hopf witnesses is for now an art, more than a science. \eeR
 
The Boros--Rost Hopf witness $\{S,i,W\}$, which includes the UPF $\{i,W\}$,  is defined by \begin{center}
  \begin{tabular}{ll}
  \textbf{RN} & \textbf{rts} \\
  \hline
  $0 \to W$        & $\Lambda$ \\
  $W \to S2$       & $\gamma\, W$ \\
  $i + W \to i$    & $\nu\, i\, W$ \\
  $S2 + i \to 2i$  & $\beta\, i\, S2$ \\
  $W \to 0$        & $\mu_1 W$ \\
  $S2 \to 0$       & $\mu_2 S2$ \\
  $i \to 0$        & $\mu_3 i$ \\
  \end{tabular}
  \end{center}
Here $S2$ denotes the susceptibles in the contracted model.

  It has  Hopf bifurcation \cite{BorRost}, since
it has only positive \ch\ coefficients at the endemic point, and the Hurwitz determinant/Hopf resultant is zero when
$$\Lambda = \frac{\left(\beta  i+\mu _2\right) \left(\gamma +i \nu +\mu _1\right) \left(\gamma ^2+\gamma  \left(\beta  i+2 i \nu +2 \mu _1+\mu _2\right)+\beta  i \left(i \nu +\mu _1+\mu _3\right)+\left(i \nu +\mu _1\right) \left(i \nu +\mu _1+\mu _2\right)\right)}{\beta  \gamma  i \nu }.$$

Finally, at the endemic point $i$ \sats\ a quadratic equation which has an unique positive root whenever $R_0=\fr{\beta  \gamma  \Lambda}{\mu _2 \mu _3 \left(\gamma +\mu _1\right)} >1,$ and where $R_0$ is precisely the \brn.

\beR The witness is obtained by ``contracting" R in   the loop which causes the SIRWS oscillation $
i \;\longrightarrow\; R \;\longrightarrow\; W \;\longrightarrow\; S \;\longrightarrow\; i
$
   to $
i \;\longrightarrow\; W \;\longrightarrow\; S2 \;\longrightarrow\; i
$, with $S$ having being denoted by $S2$,and
$i\to R\to W$ replaced by $
i \to R \to W \;\leadsto\; i \to W.
$  As noted already, a precise relation to the UPFs, which would allow automatic detection of witnesses, is not available.
\eeR
\begin{figure}[H]
\centering
\renewcommand{\sep}{7.5}
\begin{tikzpicture}[scale=0.980, every node/.style={transform shape},
  node distance=1.8cm,
  species/.style={circle, draw, thick, minimum size=8mm, inner sep=1pt},
  upf/.style={circle, draw, thick, fill=red!15, minimum size=8mm, inner sep=1pt},
  stab/.style={circle, draw, thick, fill=blue!10, minimum size=8mm, inner sep=1pt},
  rxn/.style={-{Stealth[length=2.5mm]}, thick},
  death/.style={-{Stealth[length=2mm]}, thin, dashed, gray},
  proj/.style={-{Stealth[length=3mm]}, very thick, blue!60!black},
  enlarge/.style={-{Stealth[length=3mm]}, very thick, red!60!black},
  lbl/.style={font=\small, midway},
]

\node[stab] (S) at (0, 1.8) {$S$};
\node[upf]  (i) at (1.8, 1.8) {$i$};
\node[upf]  (R) at (1.8, 0) {$R$};
\node[upf]  (W) at (0, 0) {$W$};

\draw[rxn] (S) -- node[lbl, above] {\footnotesize $\beta i S$} (i);
\draw[rxn] (i) -- node[lbl, right] {\footnotesize $\gamma_i i$} (R);
\draw[rxn] (R) -- node[lbl, below] {\footnotesize $\gamma_r R$} (W);
\draw[rxn] (W) -- node[lbl, left]  {\footnotesize $\gamma_w W$} (S);

\draw[death] (S) -- ++(0, 0.7) node[above, font=\tiny] {$\mu_s S$};
\draw[death] (i) -- ++(0.7, 0) node[right, font=\tiny] {$\mu_i i$};
\draw[death] (R) -- ++(0.7, 0) node[right, font=\tiny] {$\mu_r R$};
\draw[death] (W) -- ++(0, -0.7) node[below, font=\tiny] {$\mu_w W$};

\draw[rxn, gray] (-0.9, 1.8) node[left, font=\tiny] {$\Lambda$} -- (S);

\draw[rxn, dotted] (W) to[bend left=25]
  node[lbl, above, font=\tiny, pos=0.5] {$\nu i W$} (R);

\begin{scope}[on background layer]
  \node[draw=red, dashed, rounded corners=6pt,
    fit=(i)(R)(W), inner sep=4pt,
    label={[font=\scriptsize, red!70!black]below left:UPF $\{i,R,W\}$}] {};
\end{scope}

\node[font=\footnotesize\bfseries, anchor=south] at (0.9, 2.7) {$\mathcal{N}_{\mathrm{SIRWS}}$};

\node[stab] (S2) at (\sep, 1.8) {$S\!2$};
\node[upf]  (i2) at (\sep+1.8, 1.8) {$i$};
\node[upf]  (W2) at (\sep+0.9, 0) {$W$};

\draw[rxn] (S2) -- node[lbl, above] {\footnotesize $\beta i S\!2$} (i2);
\draw[rxn] (W2) -- node[lbl, below left, pos=0.4] {\footnotesize $\gamma W$} (S2);
\draw[rxn, dotted] (W2) -- ++(0.9, 0) node[right, font=\tiny] {$\nu i W$};

\draw[death] (S2) -- ++(0, 0.7) node[above, font=\tiny] {$\mu_2 S\!2$};
\draw[death] (i2) -- ++(0.7, 0) node[right, font=\tiny] {$\mu_3 i$};
\draw[death] (W2) -- ++(0, -0.7) node[below, font=\tiny] {$\mu_1 W$};

\draw[rxn, gray] (\sep-0.8, 0) node[left, font=\tiny] {$\Lambda$} -- (W2);

\node[font=\footnotesize\bfseries, anchor=south] at (\sep+0.9, 2.7) {$\mathcal{N}_{\mathrm{BR}}$};

\draw[proj] (2.5, 2.4) -- node[above, font=\small] {$\pi$}
  node[below, font=\scriptsize, text=blue!60!black] {collapse $R$} (\sep-0.7, 2.4);

\draw[enlarge] (\sep-0.7, -0.6) -- node[below, font=\small] {$\mathcal{E}$}
  node[above, font=\scriptsize, text=red!60!black] {inheritance} (2.5, -0.6);

\node[font=\scriptsize, text=gray, anchor=west] at (3.8, 1.0) {
  \begin{tabular}{@{}l@{}}
  $S2 \leftrightarrow S$\\
  $W \leftrightarrow W$\\
  $i \leftrightarrow i$\\
  $i{\to}R{\to}W \rightsquigarrow i{\to}W$
  \end{tabular}
};

\end{tikzpicture}

\caption{Contraction and enlargement between the red dashed SIRWS network on left
$\mathcal{N}_{\mathrm{SIRWS}}$ and the Boros--Rost witness
$\mathcal{N}_{\mathrm{BR}}$ on right.
The projection $\pi$ collapses the immune pathway $i\to R\to W$
into a direct interaction $i\to W$, which is represented however as a consumption,  since $\nu i W$ consumes W,  but
  dotted, to indicate catalysis by $i$.
The enlargement $\mathcal{E}$ (in the sense of Banaji--Boros--Hofbauer)
reverses this, inserting $R$ as an intermediate between $i$ and $W$ and recovering the full
SIRWS dynamics. The Hopf bifurcation transfers
from $\mathcal{N}_{\mathrm{BR}}$ to $\mathcal{N}_{\mathrm{SIRWS}}$ by the inheritance theorems \cite{BBH25}.}
\label{fig:contraction}
\end{figure}


\beO [algorithmic method for automatic detection of Hopf witnesses]
Devise an algorithmic method for automatic detection of Hopf witnesses, starting from  UPFs, or from oscillation cores. \eeO
\begin{remark}[concepts related to contractions in the CRN literature]\label{rem:contraction}
The projection $\pi\colon \mathcal{N}_{\mathrm{SIRWS}} \to \mathcal{N}_{\mathrm{BR}}$
that collapses the pathway $i\to R\to W$ into $i\to W$ is an instance of
\emph{species elimination/contraction} or \emph{intermediate reduction}, which appears
in several related guises in the CRN literature:

\begin{enumerate}[nosep]
\item \textbf{Lumping for model reduction.}
  In chemical kinetics, ``lumping" (
  Li--Rabitz \cite{LR89}) groups species into macro-variables,
  producing a reduced network.
  The identification $\{R,W\}\rightsquigarrow W$ (treating $R$ as a
  transient state towards the waning class) is a form of exact lumping
  when the $R\to W$ transition is fast relative to other timescales.
\item \textbf{Quasi-steady-state and intermediate elimination.}
  Classical QSSA elimination of fast intermediates
  (Briggs--Haldane, Segel--Slemrod \cite{SS89}) contracts pathways
  $A\to B\to C$ to effective reactions $A\to C$ when $B$ equilibrates rapidly.
  Feliu and Wiuf \cite{FW13} formalize this for CRNs,  and emphasize that the  {\bf intermediate elimination/projection} map $\pi$ preserves the
local Jacobian feedback structure while reducing the
network dimension.


\end{enumerate}

\end{remark}

\begin{remark}[Non-uniqueness of Hopf witnesses]
The three--species Boros--Rost admissible reduction provides one explicit mass--action
realization of  the SIRWS model, but is not unique.
In particular, the location of inflow and outflow reactions, as well as the
distribution of intermediate steps along the immune pathway, may be modified
without altering the underlying feedback structure responsible for
instability.

\end{remark}

Finally, the $(W,S2,i)$ Hopf bifurcation turns out to  extend by inheritance to the
SIRWS model \cite{BorRost}.

\section{The   ``Capasso-Ruan-Wang"  SIRS-type epidemic model with 11 reactions may have any type of bifurcations only if its reproduction function at an endemic point is larger than one :SIRScrw.wl}\label{s:CRW}

 This work originated in studying stability and bifurcation results
for  \emph{Capasso-Ruan-Wang} nonlinear SIRS epidemic models, a class which includes many  models studied in the last decades -- see for example works by Capasso and Serio,
 Shigui Ruan, Wendy Wang, L. Zhou and  M. Fan, Vyska and Gilligan, E. Avila-Vales, Pei Yu and coauthors \cite{Capasso,WR,Wang,ZhouFan,Vyska,Rivero,Perez,LuRuan21,Pan21,Gupta}.

  We define this class via the ODE:
\begin{equation}\label{Vys}
\begin{cases}
s'= \Lambda      - \beta s iF[i] 
 -(\mu_s+\gamma_s) s + \gamma_r r+  i_s i,
\\
i'=
i(\beta s  F[i] -\gamma_i -\mu_i) -T[i], &\gamma_i=i_r+ i_s \\
r'= \gamma_s s + i_r   i +T[i] -(\gamma_r +\mu_r) r  ,
\end{cases}
\end{equation}
with $9$ independent {\bf positive parameters} $(\Lambda, \beta,\gamma_r,\gamma_s,i_s,i_s,\mu_s,\mu_i,\mu_r)$, and two functional ones:
a nonnegative, nondecreasing \emph{treatment rate} $T[i]\ge 0$, $T'(i)\ge 0, T(0)=0$, and a positive \emph{inhibition function} $F>0$ with $F(0)=1$.

 The occasional use of brackets for functional dependencies is convenient  for parsing our equations in Mathematica for such terms, but we abuse notation by using also parenthesis.

\beR The introduction of the direct return from infection to susceptibility
parameter $i_s$ is  used in SIR--type models for Gonorrhea, Influenza and Common cold, which share extremely weak and transient immunity, and frequent reinfections. Mathematically, this parameter may replace a
collapsing a short--lived recovered class.

\eeR

\beR We assume nonnegative $\Lambda,\gamma_s,\mu_s, \mu_i, \gamma_r, \mu_r \ge 0$ and strictly positive $\beta,\gamma_i>0$, since the existence of infection and recovery is essential to ME models, and has the surprising consequence that all models with these two interactions have symbolic Jacobians that ``admit purely imaginary \eval s under rich kinetics" -- see Thm \ref{lem:purelyimaginary}.
\eeR

For convenience, put $$
h(i):=(i_s+ i_r) i+T[i]=\gamma_i i+T[i] \quad\text{  with $\gamma_i>0$, $T[i]\ge 0$, $T'(i)\ge 0$.}\
$$

\paragraph{Historical account}
The SIRS model \eqref{Vys} generalizes several papers 
\cite{Capasso,LiuLevin,WR,Wang,ZhouFan,Vyska,LuRuan19,xu2021complex,Roostaei}. This epidemic model is firstly inspired by the pioneering work of Capasso and Serio \cite{Capasso}, which proposed to capture in $F(i)$ self-regulating adaptations of the population to the epidemics due to the spread of information on its existence. At the origin $F(i)$ must be $1$ to fit the Kermack-McKendrick SIR model, and the authors took  $F(i) =\frac{1}{1 + b i}$, where $b$ measures the  \emph{inhibition effect}. The ensuing infection rate $f[s,i]:=\beta s i F(i)$ then follows Michaelis--Menten in $i$.

A second strain of literature was initiated by Wang and Ruan \cite{WR,Wang} (see also
 \cite{Vyska}), who included a discrete treatment term $T[i]=\eta \operatorname{min}[i, w]$   which takes into account that some people recover only after having been treated in hospital. The minimum captures
 a possible breakdown of the medical system due to the epidemics. Subsequently, Zhang and Liu \cite{zhang2008backward} proposed to study  smooth treatment terms $T[i]=\eta \frac{i}{1+ \nu i}$.
In both versions
  $\eta=T'(0)$ is  the treatment rate when $i=0$,
and  $w=\frac{\eta}{\nu}$ is an upper limit for the capacity of treatment. Roostaei et al. \cite{Roostaei} further simplified the model by assuming $\eta=1$ and showed the occurrence of Hopf bifurcations. Putting together the nonlinear term of Capasso with that of Wang and Ruan gave rise to a huge literature focused on understanding the possible bifurcations which may arise. Further explorations concerned different nonmonotone
nonlinearities for $F(i)$  \cite{Ruan07, ZhouXiao, LuRuan19, Zhang23}.


\subsection{The DFE, its boundary stability analysis, and the reproduction function}

 Let $s_0$ indicate the value of $s$ at the DFE. It may be shown  that for the  models \eqref{Vys}, \eqref{Vyss}  the $R_0$ are:
 \begin{equation}\label{R0}
   R_0 = \frac{\beta s_0}{\mu_i+\gamma_i + T'(0)},
\end{equation}

 Particular cases of this formula have appeared in the literature \cite{ZhouFan, Vyska, Gupta}; however,
$s_0$  is absent from \eqref{R0} in these papers, as they study a particular case with  $\gamma_s=0$, which implies $\sd=1$. Note that \eqref{R0} suggests that models in which $F(0)=0$ should perhaps not even be accepted as `epidemiological', since $R_0$ does not have any epidemiological interpretation. One  such example  from \cite{LuRuan19} is $ F(i)= \frac{i}{1+ b_1 i + b_2 i^2} \Lra F(0)=0$. Interestingly, this model shows a variety of dynamical behaviors as consequence of \emph{cusp} and \emph{Takens-Bogdanov bifurcations}. In later work\cite{Zhang23}, Zhang and Li showed that $F(i) =\frac{1+a i}{1  + b i^2}$ leads to similar features, with the significant improvement that $R_0$ is not identically  $0$.

\beR The fundamental formula \eqr{R0} suggests introducing the reproduction function
 \be{repF} R_0(S,i)=\frac{f'_i(S,i)}{\mu_i +h'(i)}\ee
 \eeR

\ssec{Detailed Routh-Hurwitz (DRH) bifurcation analysis}
To apply  the symbolic Jacobian analysis idea \cV, we will examine  first
the ODE with $\beta s i F[i]->f[s,i]$:
\begin{equation}\label{Vyss}
\begin{cases}
s'= \Lambda  -\gamma_s s    - f[s,i] 
+ \gamma_r r -\mu_s s+  i_s i,
\\
i'=
f[s,i] -i(\gamma_i +\mu_i) -T[i], \\
r'= \gamma_s s + i_r   i +T[i] -(\gamma_r +\mu_r) r
\end{cases}
\end{equation}

\paragraph{ODE2RNs[RHS, var, prF] converts to the CRNT (RN,rts) representation}
The command
\begin{verbatim}
{RN, rts, spe, alp, bet, gam, RNr} = ODE2RNs[RHS, var, prF]
\end{verbatim}
prints the (RN,rts) decomposition fundamental for  CRN methods:
$$\left(
\begin{array}{cc}
 \text{RN} & \text{rts} \\
 \text{i}\to \text{S} & i i_s \\
 0\to \text{S} & \Lambda  \\
 \text{R}\to \text{S} & R \gamma _r \\
 \text{i}\to \text{R} & i i_r \\
 \text{S}\to \text{R} & S \gamma _s \\
 \text{S}\to 0 & S \mu _s \\
 \text{i}\to 0 & i \mu _i \\
 \text{R}\to 0 & R \mu _r \\
 \text{i}+\text{S}\to 2 \text{i} & f(S,i) \\
 \text{i}\to \text{R} & T(i) \\
\end{array}
\right)
$$

It also outputs the spe(cies), the strings form of var, necessary in parsing reactions, alp, bet  which are the matrices of stoichiometric coefficients corresponding to the LHS and RHS of the reactions, the stoichiometric matrix gam=bet-alp, and an alternative form RNr of the reactions, in which terms common to the LHS and RHS (called catalyzers) are written above the arrow, instead of being added up on both sides, as in the first, mass-action inspired representation.

Both  ME and CRNT represent the model by a directed graph between the species, whose edges/reactions are   labeled by the rates.
 \begin{figure}[h!]
  \centering
 \begin{tikzpicture}[scale=.8,
      comp/.style={circle, draw, thick, minimum size=11mm, font=\Large},
      arr/.style={-stealth, thick},
      lbl/.style={font=\small, fill=white, inner sep=1pt}
    ]

    \node[comp] (S) at (0,0) {$S$};
    \node[comp] (I) at (3,3) {$I$};
    \node[comp] (R) at (6,0) {$R$};

    \draw[arr] (S) -- node[lbl, above left] {$\beta si F(i)$} (I);

    \draw[arr] (I) to[bend left=25] node[lbl, left=2pt, pos=0.5] {$i_s i$} (S);

    \draw[arr] (I) to[bend right=12] node[lbl, above=4pt, pos=0.3] {$i_r i$} (R);

    \draw[arr] (I) to[bend left=12] node[lbl, below=4pt, pos=0.5] {$T[i]$} (R);

    \draw[arr] (S) to[bend right=15] node[lbl, below] {$\gamma_s s$} (R);

    \draw[arr] (R) to[bend right=15] node[lbl, above] {$\gamma_r r$} (S);

    \draw[arr] (-1.8,0) -- node[lbl, above] {$\Lambda$} (S);

    \draw[arr] (S) -- node[lbl, left] {$\mu_s s$} +(0,-1.5);
    \draw[arr] (I) -- node[lbl, right] {$\mu_i i$} +(0,1.5);
    \draw[arr] (R) -- node[lbl, right] {$\mu_r r$} +(0,-1.5);

  \end{tikzpicture}
\caption{Flow diagram of model \eqr{Vys}. $\Lambda,\mu_s, \mu_i, \mu_r$ are the inflow rate of $S$ and the per
capita death rates of $S,I,R$, respectively.The linear conversion reactions
$
S   \underset{\gamma_s s}{\longrightarrow}   R,
R   \underset{\gamma_r r}{\longrightarrow}   S,    I   \underset{i_r i}{\longrightarrow}   R, I   \underset{i_s i}{\longrightarrow}   S$ model vaccinations, loss (waning) of immunity,   recovery, and very fast loss of immunity. The infection  and   {treatment rates}
$S+I   \underset{\beta s i F(i)}{\longrightarrow}   2I,
    I   \underset{T[i]}{\longrightarrow}   R$
are functional.}
  \label{fig:ZFG}
  \end{figure}

\beR

Note that the fact that the rate of the infection reaction depends on both $s$ and $i$ forces the CRN writing as $S+I   {\longrightarrow}   2I$, rather than
$S   {\longrightarrow}   I$, which is customary in ME flow diagrams, via an ``addition of I on both sides" (our package provides both versions, called respectively RN, and RNr). Note also that the RN  yields  a physical explanation of infection ``susceptible + infectious results in two infectious".
\eeR


 Recall now the core ideas of the method of  \emph{Symbolic Hunt} of bifurcations for positive equilibria \cite{Vas}:
\begin{enumerate}

\item The  Jacobian matrix $Jac$ of \eqref{Ga} is made symbolic,
\begin{equation}Jac:=\Gamma \frac{\partial \mathbf{r}(x)}{\partial x}=G(\mathbf{r}'),
\end{equation}
 with  `free' symbols  $\mathbf{r}'_{jk}$  which replace $\partial r_j / \partial x_k$ (by the monotonicity assumption on rates,  all symbols are strictly positive at a positive equilibrium).

\item The  existence of a positive equilibrium of \eqref{Ga} necessarily requires the existence of a positive right kernel vector \emph{equilibrium flux}  $\bar{\mathbf{r}}>0$ of the stoichiometric matrix $\Gamma$, i.e.,
\begin{equation}\label{eq:FluxVector}
    \Gamma\bar{\mathbf{r}}=0.
\end{equation}

For the \CRW\ model, the flux constraints at an endemic equilibrium $(\bar{s},\bar{i},\bar{r})>0$  are:
\begin{equation}\label{eq:endeqcap}
\begin{cases}
\Lambda=\mu_s\bar{s}+\mu_i\bar{i}+\mu_r \bar{r};\\
f(\bar{s},\bar{i})=h(\bar{i})+\mu_i \bar{i};\\
(\gamma_r+\mu_r) \bar{r}=\gamma_s\bar{s}+h(\bar{i}).
\end{cases}
\end{equation}
Note that \eqref{eq:endeqcap} together with $h(i)>0$ imply that the  existence of an endemic equilibrium  requires
\begin{equation}\label{eq:capassoeqnecessary}\gamma_r+\mu_r>0.
\end{equation}
This condition is satisfied in most models as $\gamma_r=\mu_r=0$ whould imply no mortality and that the individuals will not be leaving the $R$ class.

\item Assuming the existence of an equilibrium flux vector $\bar{\mathbf{r}}$ and of a choice of symbols $\bar{\mathbf{r}}'$ such that $G(\bar{\mathbf{r}}')$ is non-hyperbolic,  for any given parametric rate $r_j(x,p)$ one needs to find $\bar{x}>0$ and parameter values $\bar{p}$ such that
\begin{equation}\label{eq:parameterconstraints}
\begin{cases}
r_j(\bar{x},\bar{p})=\bar{v}_j\\
\partial r_j(\bar{x},\bar{p})/\partial x_k=\bar{v}'_{jk}.
\end{cases}
\end{equation}

The common sense program above culminated in the introduction of ``\emph{parameter-rich kinetics}" in \cite{VasSta}. These are defined  by conditions which ensure,  in the monotone case,  the existence of positive solutions to \eqref{eq:parameterconstraints} for any choice of $\bar{x}>0$, and which include Michaelis--Menten and Hill, but not mass action. For systems that include reaction rates that are in mass action form, we can still use the approach both to exclude bifurcations or as a first guidance to find a bifurcation. The full solution of \eqref{eq:parameterconstraints} must be however pursued case-by-case.

\end{enumerate}

Using this approach, we have proved in \cite{VAA}   that a very common network pattern in ME  implies that the symbolic Jacobian admits purely imaginary eigenvalues.

\beT[symbolic Jacobians of ME models have  purely imaginary eigenvalues, for some parameter set] \label{lem:purelyimaginary}
Consider an epidemiological model which contains at least two populations, called Susceptible $S$ and Infected $I$, and two reactions $1$ and $2$ which imply an ODE
\begin{equation}
\begin{cases}
\dot{s}=-r_1(s,i)+ ...\\
i'=r_1(s,i)-r_2(i)+...\\
...
\end{cases}\end{equation}
where the functions $r_1(s,i)$ and $r_2(i)$ are assumed monotone increasing in $s$, $i$.

Then, there exists a parameter set such that the symbolic Jacobian of such an epidemiological system has purely imaginary eigenvalues.
\eeT

\begin{remark} For systems that are parameter-rich and admit a positive endemic equilibrium, Theorem \ref{lem:purelyimaginary} implies the existence of an equilibrium with purely imaginary eigenvalues. However, ME systems typically include many reactions in mass-action form, therefore Theorem \ref{lem:purelyimaginary} is non-conclusive, and the existence of periodic solutions must be checked case-by-case.
\end{remark}


\ssec{Detailed \RH\  analysis of the symbolic Jacobian of the system \eqref{Vyss}}
The symbolic Jacobian of the system \eqref{Vyss} $
\begin{cases}
s'= \Lambda  -\gamma_s s    - f[s,i] 
+ \gamma_r r -\mu_s s+  i_s i,
\\
i'=
f[s,i] -i(\gamma_i +\mu_i) -T[i], \\
r'= \gamma_s s + i_r   i +T[i] -(\gamma_r +\mu_r) r
\end{cases}$:
\begin{equation}\label{eq:jac}
G=\left(
\begin{array}{ccc}
 -f'_s-\gamma _s-\mu _s & i_s-f'_i & \gamma _r \\
 f'_s & f'_i-\mu _i-h' & 0 \\
 \gamma _s & i_r+T'(i) & -\gamma _r-\mu _r \\
\end{array}
\right)
\end{equation}
reveals that
\begin{remark}\label{obs:gersh}
If $f'_i\le 0$ 
 then the symbolic Jacobian matrix $G$ is weakly-diagonally (column) dominant, i.e. $|G_{ii}|\ge \sum_j |G_{ji}|$, and thus $G$ never possesses eigenvalues with positive real-part.

  For  any bifurcation  to be possible, we must have $f'_i>0$ at an endemic equilibrium.

\end{remark}

 The following detailed Routh-Hurwitz analysis  strengthens Remark \ref{obs:gersh}.

 The characteristic polynomial of the symbolic Jacobian $G$:
$$g(\lambda)=\operatorname{det}(\lambda \operatorname{Id}-G)=c_0\lambda^3+c_1\lambda^2+c_2\lambda+c_3,$$
\begin{equation}
\text{with }
\begin{cases}
\begin{split}
c_0&=1\\
c_1&= f'_s + h' + (\gamma_r +\mu_r+ \gamma_s  + \mu_s+\mu_i )-f'_i \\
c_2&=f'_i(\gamma_r + \gamma_s + \mu_r + \mu_s) - f'_s\gamma_r - f'_sh' - \gamma_rh' - \gamma_sh' - f'_s\mu_i - f'_s\mu_r \\
&- \gamma_r\mu_i - \gamma_s\mu_i - \gamma_r\mu_s - \gamma_s\mu_r - h'\mu_r - h'\mu_s - \mu_i\mu_r - \mu_i \mu_s - \mu_r \mu_s\\
c_3&=\gamma_sh'\mu_r+\gamma_s\mu_i\mu_r+\mu_s  h' \gamma_r +f'_s h' \mu_r +  \mu_i \mu_s \gamma_r+ f'_s \mu_i \mu_r+f'_s \mu_i\gamma_r +\mu_s h' \mu_r+\mu_s \mu_i \mu_r\\&- f'_i (\mu_s \gamma_r+\mu_s\mu_r+\gamma_s\mu_r).
\end{split}
\end{cases}
\end{equation}
We used our Mathematica package to compute these coefficients.

First observation is that both $c_3$ and $c_1$  are destabilized by
$f'_i$. Also,
 rewriting the coefficients $c_2, c_3$ as
\begin{equation}\label{eq:cp_G}
\begin{cases}
c_2=f'_s(h'+\mu_i+\gamma_r+\mu_r)+\mu_s(h'+\gamma_r+\mu_r)-
(f'_i-h'-\mu_i)(\gamma_r+\gamma_s+\mu_r+\mu_s)\\
c_3=f'_s (h' \mu_r +   \mu_i \mu_r+\mu_i\gamma_r)-(f'_i-h'-\mu_i) (\mu_s \gamma_r+\mu_s\mu_r+\gamma_s\mu_r)
\end{cases},
\end{equation}
we refine this to $G_{ii}=f'_i-h'-\mu_i>0$ being  necessary for  zero-eigenvalue bifurcation (also, in the opposite case, note that the positivity of all the coefficients excludes real positive roots). Interestingly, the simplification above reveals that all the coefficients can be written as first order polynomials
in a quantity which is related to $R_0$ and its generalization $R_0(S,i)=\frac{f'_i(S,i)}{\mu_i +h'(i)}$, which are related to the lower triangular block structure of the Jacobian at the DFE.

  A further inspection of \eqref{eq:cp_G} yields
\begin{lemma}\label{lem:zeroeigcap}
The following statements hold:
\begin{enumerate}
    \item For the fully-open model $\mu_s,\mu_i,\mu_r>0$ and the fully-closed model $\mu_s=\mu_i=\mu_r=0$
there is always a choice of $f'_i>0$ such that the symbolic Jacobian $G$ has a change of stability via zero-eigenvalue.
\item Consider the half-open models with $\mu_i>0$, $\mu_s=0$. If  $\gamma_s\mu_r=0$ then there is never a choice of parameters such that the symbolic Jacobian $G$ has a change of stability via zero-eigenvalue.
\end{enumerate}
\end{lemma}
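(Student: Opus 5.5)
The plan is to work directly with the rewritten coefficient $c_3=\det(-G)$ from \eqref{eq:cp_G}. A change of stability through a zero eigenvalue, for the symbolic Jacobian $G$, means that along a path of symbols $c_3$ changes sign and vanishes. The other coefficients stay controlled, so that on one side $G$ is Hurwitz and on the other it has a real positive eigenvalue; by Lemma \ref{lem:det-sign-instability}, $c_3<0$ already gives instability. Write
\[
c_3=f'_s A-(f'_i-h'-\mu_i)B,\qquad A:=h'\mu_r+\mu_i\mu_r+\mu_i\gamma_r,\quad B:=\mu_s\gamma_r+\mu_s\mu_r+\gamma_s\mu_r .
\]
Since $c_3$ is affine in $f'_i$, the whole question reduces to whether $B>0$, so that $f'_i$ can push $c_3$ through zero, and whether the crossing is a genuine stability change.

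For item 1 in the fully open case ($\mu_s,\mu_i,\mu_r>0$) we have $A,B>0$. For $f'_i\le h'+\mu_i$ all of $c_1,c_2,c_3$ are positive. The Hurwitz condition $c_1c_2>c_3$ can be arranged, for instance by taking $f'_i$ small and using Remark \ref{obs:gersh}, or by taking $f'_s$ large. Now increase $f'_i$ to
\[
f'_i=h'+\mu_i+\frac{f'_sA}{B}.
\]
At this value $c_3=0$ while $c_2>0$, so $\lambda=0$ is a simple eigenvalue, and $\partial c_3/\partial f'_i=-B\neq0$ makes the crossing transversal. I would keep $f'_s$ large so that $c_1>0$ and $c_2>0$ persist near the crossing; each contains a positive $f'_s$-term. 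In the fully closed case $\mu\equiv0$, both $A$ and $B$ vanish, so $c_3\equiv0$ and $G$ has a permanent zero eigenvalue coming from the conservation law $s+i+r$. The correct statement is then about the restricted $2\times2$ problem, where the relevant "determinant" is $c_2$:
\[
c_2=f'_s(h'+\gamma_r)-(f'_i-h')(\gamma_r+\gamma_s).
\]
This is again affine in $f'_i$ with a nonzero coefficient, because $\gamma_r+\gamma_s>0$ by \eqref{eq:capassoeqnecessary}-type reasoning. The same crossing argument then applies on the stoichiometric compatibility class.

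For item 2, set $\mu_s=0$. Then $B=\gamma_s\mu_r$, so the hypothesis $\gamma_s\mu_r=0$ gives $B=0$ and $c_3=f'_sA$, which is independent of $f'_i$. If $\mu_r>0$, then $A\ge\mu_i\mu_r>0$ and $c_3>0$ for all admissible symbols, so no zero eigenvalue ever occurs. If $\mu_r=0$, then $A=\mu_i\gamma_r$. This is positive when $\gamma_r>0$, which is forced by \eqref{eq:capassoeqnecessary} once an endemic equilibrium exists.

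The main obstacle is making "change of stability" rigorous rather than just "$c_3$ vanishes". I need to check that $c_1,c_2$ remain positive near the crossing in the open case, and I need to handle the conservation-law degeneracy in the closed case. For both I would lean on a large-$f'_s$ choice to dominate the negative $f'_i$ contributions.
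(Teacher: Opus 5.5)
Your core reduction is the paper's. The lemma is justified there only by ``inspection'' of \eqref{eq:cp_G}: $c_3=f'_sA-(f'_i-h'-\mu_i)B$ is affine in $f'_i$ with slope $-B$. So $B>0$ lets $f'_i$ drive $\det G$ through zero, while $\mu_s=\gamma_s\mu_r=0$ forces $B=0$ and $c_3=f'_sA>0$. Your item 2 is correct. Passing to $c_2$ on the stoichiometric class in the closed case is a sensible reading of a statement that is degenerate there ($c_3\equiv0$). Incidentally, recomputing from \eqref{eq:jac}, $A$ should contain $(h'-i_s)\mu_r=(i_r+T')\mu_r$ rather than $h'\mu_r$, and the constant term of $c_2$ in \eqref{eq:cp_G} should be $B$. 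Neither changes your argument, since still $A\ge\mu_i\mu_r$, resp.\ $A=\mu_i\gamma_r$, and the slopes in $f'_i$ are unchanged.

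The gap is in the extra step you add to turn item 1 into a genuine change of stability. Taking $f'_s$ large does not keep $c_1,c_2>0$ at the crossing. The crossing value $f'_i=h'+\mu_i+f'_sA/B$ itself grows linearly in $f'_s$, so the negative $f'_i$-terms grow as fast as the positive $f'_s$-terms. At the crossing
\[
c_1=f'_s\Bigl(1-\frac{A}{B}\Bigr)+\gamma_r+\mu_r+\gamma_s+\mu_s ,
\]
which is negative for large $f'_s$ whenever $A>B$ (e.g.\ $\mu_s$ small compared with $\mu_i$).

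For a concrete case, take $\gamma_s=i_s=T'=0$, $i_r=\gamma_r=\mu_i=\mu_r=1$, $\mu_s=0.1$. Then $A=3$, $B=0.2$, and $c_3=0$ at $f'_i=2+15f'_s$, where $c_1=2.1-14f'_s$. For $f'_s>0.15$ the zero eigenvalue therefore appears on an already unstable $G$; stability was lost earlier, through a purely imaginary pair. The closed case behaves the same way: with $\gamma_s=i_s=T'=0$ and $\gamma_r=i_r=1$, one has $c_2=0$ at $f'_i=1+2f'_s$, where $c_1=1-f'_s$.

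The repair is to go the other way and let $f'_s\to0^+$. The crossing value then tends to $h'+\mu_i$ (resp.\ $h'$). In the open case $c_1\to\gamma_r+\mu_r+\gamma_s+\mu_s>0$ and $c_2\to(\gamma_s+\mu_s)(\gamma_r+\mu_r)-\gamma_s\gamma_r=B>0$, since only the $\{s,r\}$ principal minor survives. In the closed case $c_1\to\gamma_r+\gamma_s>0$. Hence for small $f'_s$ the characteristic polynomial at the crossing is $\lambda(\lambda^2+c_1\lambda+c_2)$ (resp.\ $\lambda(\lambda+c_1)$ on the stoichiometric class), with the remaining roots in the open left half-plane. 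Together with the nonzero slope in $f'_i$, this gives a transversal loss of stability through a simple zero eigenvalue.
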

\beR The half-open models in point 2 of Lemma \ref{lem:zeroeigcap} can be thought to model deadly infectious diseases where the death rate of susceptible population is considered negligible.
\eeR

The conclusion of the previous observations is:
\beT [zero-eigenvalue bifurcation requires the reproduction function to be larger than 1 at the endemic point]\label{thm:necessarybif}

A) For a zero-eigenvalue bifurcation to occur at an endemic equilibrium point $(\bar{s},\bar{i},\bar{r})$ of system $\eqref{Vys}$ it is necessary that
\begin{equation}\label{eq:Rcapasso}
G_{ii}(\bar{s},\bar{i})>0 \Eq f'_i(\bar{s},\bar{i})>h'(\bar{i})+\mu_i \Eq R_0(\bar{s},\bar{i})>1.
\end{equation}
Note that either $f$ or $h$ must be nonlinear in $i$.


\eeT
The proof  requires just examining $c_3$ in all the cases mentioned in Lemma \ref{lem:zeroeigcap}.
\beR  Interestingly, the necessary condition \eqr{eq:Rcapasso} for
bifurcation to occur at any endemic equilibrium point is expressed in terms of
the  reproduction function, an object defined using the factorization of the Jacobian at the DFE (from Theorem \ref{t:NGM}).
This happens heuristically due to a ``relay phenomena": the endemic points exist only when the DFE is unstable, which was  observed in many examples, see for example  \cite{AH}, but not proved under general conditions. \eeR

\paragraph{A particular case}
\beT[nonlinearity of T(i) is necessary for bifurcation in the Monod-Haldane class] \label{cor:necessaryhopfMM}
If $f[s,i]$ belongs to the Monod-Haldane class \begin{equation}\label{eq:rationalform}
f[s,i]=\beta s i\frac{1}{1+b i^n}, \quad n\ge 1,
\end{equation}
(which includes in particular Michaelis-Menten for $n=1$),
 then for an equilibrium bifurcation to occur, the nonlinearity of $T[i]$ is necessary.
\eeT

\begin{proof}
Firstly note that
\begin{equation}
f'_i(s,i)=\frac{\beta s(1-b(n-1) i^n)}{(1+b i^n)^2} \le \frac{\beta s}{(1+b i^n)^2}=\frac{\beta s i}{(1+b i^n)^2 i}=\frac{f[s,i]}{(1+b i^n)i}\le\frac{f[s,i]}{i}.
\end{equation}
We used the fact that at an endemic fixed point $i > 0.$ Assume  that $h(i)$ is linear, i.e. $T[i]=0$. Fix an equilibrium $(\bar{s},\bar{i})$. We have
\begin{equation}
G_{ii}(\bar{s},\bar{i})=f'_i(\bar{s},\bar{i})-\gamma_i-\mu_i\le\frac{f(\bar{s},\bar{i})}{\bar{i}}-\gamma_i -\mu_i =\frac{f(\bar{s},\bar{i})-\gamma_i \bar{i}  -\mu_i \bar{i}}{\bar{i}}=0
\end{equation}
and Theorem \ref{thm:necessarybif}
excludes equilibria bifurcations.
\end{proof}

\section{Conclusion}

In this paper, we have advanced the ongoing synthesis between chemical reaction network theory (CRNT) and mathematical epidemiology (ME) by establishing conceptual bridges,
generalizing classical results, and demonstrating the utility of symbolic computational methods for bifurcation analysis in positive ODE systems.

Our first contribution is a CRNT-flavored generalization of the next generation matrix (NGM) theorem. By exploiting the polynomial structure of mass-action systems and the
invariance of boundary faces (siphons), we proved that the mixed Jacobian block $J_{xy}$) vanishes identically on such faces, yielding a block lower-triangular Jacobian.
This structural result reduces stability analysis on invariant faces to the separate examination of diagonal blocks, and provides a natural setting for regular splittings and
the definition of reproduction functions. The theorem highlights that forward invariance of the face—a property encoded in the stoichiometric structure—is the essential ingredient,
clarifying the algebraic underpinnings of the NGM approach.

Our second major theme is the systematic application of the Vassena-Stadler symbolic Jacobian methodology, which decomposes characteristic polynomial coefficients into signed sums
indexed by child selections (matchings between species and reactions). Using this framework, we analyzed two epidemiological models. For the SIRWS model with immunity boosting,
we identified unstable positive feedbacks (UPFs) and their minimal Hurwitz-stable supersets, confirming the structural capacity for Hopf bifurcations.
The recent inheritance results of Boros and Röst, which lift a three-species Hopf witness to the full four-dimensional SIRWS model, illustrate the power of combining child selection
analysis with inheritance theorems. For the broader Capasso–Ruan–Wang SIRS family, we performed a detailed symbolic Routh–Hurwitz analysis of the three-dimensional Jacobian and
derived necessary conditions for bifurcations. In particular, we showed that any endemic bifurcation requires the reproduction function to exceed unity ($R_0 > 1$)) and that for the
Monod–Haldane class of incidence functions, nonlinear treatment is necessary for bifurcation. These results demonstrate how symbolic Jacobian methods can yield model-independent
insights.

Throughout, we have emphasized the complementarity between symbolic instability criteria (child selections, UPFs) and inheritance results that transfer bifurcations from reduced
cores to larger networks. The interplay between these approaches, together with the computational tools implemented in our EpidCRN package, provides a powerful workflow for
analyzing positive ODEs arising in epidemiology and beyond.

Several open problems remain. First, a precise characterization of when the DFE face admits a regular splitting (Problem 1) would unify stability analyses across many ME models.
Second, the automatic detection of Hopf witnesses from child selection data (Problem 2) would transform the current heuristic process into a systematic algorithm. Finally, extending
the symbolic Jacobian framework to fully parametric mass-action systems—where the existence of equilibria imposes additional constraints—remains a challenge that inheritance theory
and specialized criteria are beginning to address.

In conclusion, the synergy between CRNT and ME offers not only a richer theoretical understanding of positive ODEs but also practical computational strategies for stability and
bifurcation analysis. The tools and results presented here lay the groundwork for further unification, enabling the transfer of concepts such as siphons, child selections,
and inheritance across disciplines, and opening new avenues for the study of complex dynamical systems in biology and epidemiology.
\section{Declarations}

\textbf{Compliance of Ethical Standard:} The authors have no competing interests to declare that are relevant to the content of this article. The paper is of a theoretical nature so
no experiments were performed and no human or animal subjects were involved.

{\small
 \bibliographystyle{amsalpha}
 \bibliography{ref}
}
\AtEndDocument{%
  \glsaddall  
  }
\end{document}